\documentclass{ifacconf}

\usepackage{graphicx}      %
\usepackage{natbib}        %

\usepackage[retainorgcmds]{IEEEtrantools}

\usepackage{physics2}
\usephysicsmodule{ab, diagmat, xmat}

\newcommand{\bmat}[1]{\begin{bmatrix}#1\end{bmatrix}}

\newcommand{\norm}[1]{\ab\|#1\|} 

\usepackage[separate-uncertainty=true]{siunitx}
\usepackage{amssymb}
\usepackage{arydshln}
\usepackage{amsmath}
\usepackage{accents}
\usepackage{mathtools}

\usepackage[
    bb=stixtwo, 
    scr=euler,
]{mathalpha}

\usepackage{url}

\usepackage{enumitem}

\usepackage{cleveref}
\crefname{assumption}{assumption}{assumptions}
\Crefname{assumption}{Assumption}{Assumptions}
\crefname{condition}{condition}{conditions}
\Crefname{condition}{Condition}{Conditions}
\crefname{figure}{Fig.}{Figs.}
\Crefname{figure}{Fig.}{Figs.}
\crefname{equation}{}{}
\Crefname{equation}{}{}

\crefname{lem}{Lemma}{Lemmas}
\Crefname{lem}{Lemma}{Lemmas}
\crefname{rem}{Remark}{Remarks}
\Crefname{rem}{Remark}{Remarks}
\crefname{thm}{Theorem}{Theorems}
\Crefname{thm}{Theorem}{Theorems}
\crefname{cor}{Corollary}{Corollaries}
\Crefname{cor}{Corollary}{Corollaries}
\crefname{prop}{Proposition}{Propositions}
\Crefname{prop}{Proposition}{Propositions}
\crefname{defn}{Definition}{Definitions}
\Crefname{defn}{Definition}{Definitions}

\crefname{app}{Appendix}{Appendices}
\Crefname{app}{Appendix}{Appendices}

\makeatletter
\patchcmd{\@IEEEyesnumber}
  {\stepcounter}
  {\refstepcounter}
  {}{}
\patchcmd{\@@IEEEeqnarray}
  {\stepcounter}
  {\refstepcounter}
  {}{}
\patchcmd{\@@IEEEeqnarraycr}
  {\stepcounter{IEEEsubequation}}
  {\refstepcounter{IEEEsubequation}}
  {}{}
\patchcmd{\@@IEEEeqnarraycr}
  {\stepcounter{IEEEsubequation}}
  {\refstepcounter{IEEEsubequation}}
  {}{}
\patchcmd{\@@IEEEeqnarraycr}
  {\stepcounter{IEEEequation}}
  {\refstepcounter{IEEEequation}}
  {}{}
\patchcmd{\@@IEEEeqnarraycr}
  {\stepcounter{IEEEequation}}
  {\refstepcounter{IEEEequation}}
  {}{}
\makeatother

\usepackage{nicematrix}
\NiceMatrixOptions{margin}

\usepackage{booktabs} %
\usepackage{multirow}
\usepackage{makecell}

\usepackage{array}
\newcommand{\PreserveBackslash}[1]{\let\temp=\\#1\let\\=\temp}
\newcolumntype{C}[1]{>{\PreserveBackslash\centering}m{#1}}
\newcolumntype{R}[1]{>{\PreserveBackslash\raggedleft}m{#1}}
\newcolumntype{L}[1]{>{\PreserveBackslash\raggedright}m{#1}}
\usepackage{nicematrix}
\usepackage{arydshln}

\usepackage{tikz}
\usepackage{circuitikz}
\usetikzlibrary{calc,shapes,arrows.meta,quotes,positioning,angles,quotes, decorations.pathmorphing,patterns,decorations.markings,
backgrounds,fit}

\usepackage{todonotes}
\setuptodonotes{inline, inlinewidth=\linewidth}

\renewcommand{\L}{\mathsf{L}}
\newcommand{\R}{\mathsf{R}}
\newcommand{\tf}[1]{\mathbf{#1}}
\newcommand{\ttf}[1]{\boldsymbol{#1}}
\DeclareMathOperator{\spreg}{SpReg_{2}}
\DeclareMathOperator*{\H2}{\mathcal{H}_2}
\DeclareMathOperator*{\Hinfty}{\mathcal{H}_\infty}

\DeclareMathOperator*{\graph}{\mathcal{G}}
\DeclareMathOperator*{\vertices}{\mathcal{V}}
\DeclareMathOperator*{\edges}{\mathcal{E}}

\definecolor{NavyBlue}{RGB}{20,20,200}
\newcommand{\red}[1]{\textcolor{red}{#1}}

\definecolor{blue_connection_color}{RGB}{115, 141, 211}
\newcommand{\blue}[1]{\textcolor{blue_connection_color}{#1}}

\definecolor{yellow_nodes}{RGB}{206, 157, 53}
\newcommand{\yellow}[1]{\textcolor{yellow_nodes}{#1}}

\begin{document}
\begin{frontmatter}

\title{%
Data-Driven Optimal Distributed Controller Synthesis via Spatial Regret
}

\thanks[footnoteinfo]{This work was supported as a part of NCCR Automation, a National Centre of Competence in Research, funded by the Swiss National Science Foundation (grant number 51NF40\_225155), the Swiss National Science Foundation (grant number 200021-204962), the Swiss National Science Foundation Ambizione (grant  number PZ00P2\_208951) and the NECON project (grant number 200021-219431).}

\author[EPFL]{Vaibhav Gupta\thanksref{equal_contributions}} 
\author[EPFL]{Daniele Martinelli\thanksref{equal_contributions}} 
\author[EPFL]{Giancarlo Ferrari-Trecate}
\author[Oxford]{Luca Furieri}
\author[EPFL]{Alireza Karimi}

\address[EPFL]{Laboratoire d’Automatique, EPFL, CH-1015 Lausanne, Switzerland (e-mail: $\{$vaibhav.gupta, daniele.martinelli, giancarlo.ferraritrecate, alireza.karimi$\}$@epfl.ch)}
\address[Oxford]{Department of Engineering Science, University of Oxford, United Kingdom (email: luca.furieri@eng.ox.ac.uk)}

\thanks[equal_contributions]{V. Gupta and D. Martinelli contributed equally to this work.}

\begin{abstract}
In this paper, we present a novel method for synthesising an optimal distributed spatial regret controller using experimentally obtained frequency-response data. \emph{Spatial regret} provides a measure of the performance gap between a structured distributed controller and an oracle with enhanced communication topology. We relax assumptions on the communication topology, allowing the oracle to adopt any enhanced structure. %
While this generalisation requires an iterative solution in place of a single convex program, we provide a tractable algorithm that synthesises optimal controllers from frequency-response data while preserving stability and the desired communication structure.
Through numerical examples, we illustrate the better performance of the spatial regret controller compared to classical $\mathcal{H}_2/\mathcal{H}_\infty$ designs, underscoring the effectiveness of the proposed methodology.
\end{abstract}

\begin{keyword}
Design methods for data-based control, %
Controller constraints and structure %
Optimal control theory, %
Control under communication constraints, %
Distributed control and estimation, %
\end{keyword}

\end{frontmatter}

\section{Introduction}
As systems grow in scale and complexity, centralised control becomes ineffective due to single points of failure, high computational burden, and limited communication bandwidth \citep{scattolini2009architectures}. Distributed architectures address these challenges by assigning local controllers to subsystems that coordinate through limited communication, with applications ranging from power systems to cooperative robotics \citep{dorfler2014sparsity,ren2008distributed}.
However, designing optimal distributed controllers under structural communication constraints remains notoriously challenging \citep{witsenhausen1968counterexample}.
Building on this observation, several works have identified conditions under which distributed controllers for linear systems can be synthesised via convex optimisation, most notably through quadratic invariance and its extensions~\citep{rotkowitz2005characterization,furieri2020sparsity,naghnaeian2024youla}.

A fundamental question in optimal control concerns how to evaluate controller performance in the presence of disturbances. Controllers are typically evaluated via closed-loop norms (${\H2/\Hinfty}$) \citep{zhou1998essentials}, which embed assumptions on disturbance characteristics: the $\H2$ norm assumes disturbances generated by stochastic processes, whereas the $\Hinfty$ norm considers worst-case disturbances. However, these metrics are graph-agnostic and treat all disturbances uniformly, failing to capture how performance degradation depends on the location and structure of the disturbances. In distributed control, where structural constraints fundamentally restrict achievable disturbance rejection, it is natural to ask how much performance is lost due to limited information and communication.

\citet{martinelli2025spatialregret} addressed this question by introducing \emph{spatial regret}, a metric that quantifies the performance gap between a constrained distributed controller and an \emph{oracle} operating under an augmented communication graph. The oracle serves as a hypothetical (``what-if'') reference benchmark controller with additional sensing or communication links, which we aim to mimic.
For example, in a networked system where certain nodes experience disturbances, a suitable oracle topology could enable more distant nodes to access measurements from these affected nodes, thereby reducing disturbance propagation. The spatial regret quantifies the performance penalty induced by communication constraints in the actual network topology, revealing which links, if added, would most effectively improve performance. However, the synthesis framework in \citet{martinelli2025spatialregret} assumes an accurate state-space model, which is often unavailable or unreliable for large-scale systems.

This motivates a \emph{data-driven} approach, where controllers are designed directly from frequency-response measurements without requiring an intermediate system identification step. The use of frequency response for the analysis and synthesis of controllers in linear systems has been a well-established practice in the literature. The frequency response of the system can be easily extracted from the input-output data, as presented in \citet{pintelon2012system}. However, controller design using frequency-domain data typically results in a non-convex optimisation problem. A non-smooth optimisation framework was used in \citet{apkarian_StructuredH_Infty_2018} to compute fixed-structure  $\mathcal{H}_\infty$ controllers directly from frequency-domain data. Several methodologies employ convex approximations, such as PID design using constraint linearization \citep{hast_PIDDesignConvexconcave_2013}, fixed-structure controller design for SISO systems \citep{karimi_FixedorderH_Infty_2010}, and MIMO-PID tuning via convex-concave optimisation \citep{boyd_MIMOPIDTuning_2016}. \citet{karimi_DatadrivenApproachRobust_2017} proposed a fixed-structure data-driven controller design method for multivariable systems that achieves mixed $\mathcal{H}_2 / \mathcal{H}_\infty$ sensitivity performance using iterative convex optimisations. Furthermore, \citet{schuchert_DatadrivenFixedstructureFrequencybased_2024} extended the technique to fixed-structure controller design for generalised systems in linear fractional representation (LFR) form. \citet{gupta_NonparametricIQCMultipliers_2026} integrated robustness into the framework for the non-linearity expressed as Integral Quadratic Constraints (IQCs). Additionally, the framework facilitates controller design for distributed control architectures and has been successfully implemented in microgrids \citep{madani_DataDrivenDistributedCombined_2021}.

This paper develops a data-driven framework for spatial regret minimisation, enabling controller synthesis directly from frequency-response measurements. We extend \citet{martinelli2025spatialregret} by recasting spatial regret minimisation within the data-driven frequency-domain framework of \citet{schuchert_DatadrivenFixedstructureFrequencybased_2024}, removing the requirement for a parametric model and generalising the admissible controller structures. While this flexibility comes at the cost of requiring an iterative solution instead of a single convex program and of losing completeness over the set of all stabilising controllers, the resulting framework enables practical implementation without requiring a parametric model.

Our contribution is threefold. First, we extend the well-posedness result to arbitrary controller classes, removing restrictive assumptions on controller parameterisation. Second, we reformulate spatial regret minimisation in the frequency domain and utilise a convex lower bound that yields a tractable iterative algorithm for fixed-structure distributed controllers. Third, we establish that this algorithm preserves closed-loop stability at each iteration when initialised with a stabilising controller. 
Numerical experiments on a power-grid model demonstrate that data-driven spatial regret controllers achieve superior performance for localised disturbances compared to classical norm-based designs.

\textbf{\textit{Notation:}} 
In this article, ${M \succ (\succeq)  \, N}$ indicates that ${M-N}$ is a positive (semi-) definite matrix and ${M \prec(\preceq) \, N}$ indicates ${M-N}$ is negative (semi-) definite. The zero and identity matrices of appropriate size are denoted $0$ and $I$, respectively. The transpose of a matrix $M$ is denoted by $M^T$ and its conjugate transpose by $M^*$. When $M \in \mathbb{C}^{n\times m}$ is full row rank, its right inverse is defined as $M^\R:= M^* (M M^*)^{-1}$, satisfying $M M^\R = I$ and $M^\R M$ are Hermitian. Conversely, when $M$ is full column rank, then its left inverse is defined as $M^\L := (M^* M)^{-1}M^*$, satisfying $M^\L M = I$ and  $M M^\L$ is Hermitian. In the case $M$ is square and full rank, $M^\R = M^\L = M^{-1}$. 
$\mathcal{R}_p^{n_o \times n_i}$ denotes the set of real-rational proper transfer function matrices for a MIMO system with $n_o$ outputs and $n_i$ inputs, while $\mathcal{RH}_\infty^{n_o \times n_i}$ denotes the set of real rational stable transfer functions with bounded infinity norm. For discrete-time systems, let $z$ denote the Z-transform variable and ${\Omega \coloneqq [0, \pi/T_s)}$, where $T_s$ is the sampling time.
We denote transfer function in bold (e.g., $\tf{M}$).
For any matrix \(M\) that is partitioned into blocks, we denote by \(M^{[i,j]}\) the \((i,j)\) block of \(M\), with subblock dimensions determined by context. We use similar notation for vectors and transfer function matrices: $x^{[i]}$ is the $i$-th subvector of $x$, and $\tf{M}^{[i,j]}$ denotes the $(i,j)$ block of $\tf{M}$.
We use $\norm{\cdot}_2$ to denote the $\ell_2$ norm of signals and, by slight abuse of notations, we also use $\norm{\cdot}_2$ and $\norm{\cdot}_\infty$ to refer to the $\H2$ and $\Hinfty$ system norms, respectively.
Additionally, we consider a directed graph represented as $\graph = (\vertices, \edges)$, where the set of vertices is denoted as $\vertices$ and the set of edges as $\edges$.

\section{Problem formulation} \label{sec:problem_formulation}
We consider networked discrete-time linear time-invariant (LTI) systems whose interaction topology is described by a directed graph \(\graph\). Each node $i$ of the graph represents a subsystem with dynamics given by:
\begin{equation}\label{eq:ss_networked_system}
\begin{aligned}
x^{[i]}_{t+1} &= \sum_{j \in \mathcal{N}_i} A^{[i,j]} x^{[j]}_t + B^{[i,j]}_{1} w^{[j]}_t  + B^{[i,j]}_{2} u^{[i]}_t, 
\\
z^{[i]}_t &= \sum_{j \in \mathcal{N}_i} C^{[i,j]}_{1} x^{[j]}_t + D^{[i,j]}_{11} w^{[j]}_t + D^{[i,j]}_{12} u^{[i]}_t,
\\
y^{[i]}_t &= \sum_{j \in \mathcal{N}_i} C^{[i,j]}_{2} x^{[j]}_t + D^{[i,j]}_{21} w^{[j]}_t + D^{[i,j]}_{22} u^{[i]}_t,
\end{aligned}
\end{equation}
where \(x^{[i]}_t \in \mathbb{R}^{n_i}\) is the state, \(u^{[i]}_t \in \mathbb{R}^{m_i}\) the control input, \(w^{[i]}_t \in \mathbb{R}^{{n_w}_i}\) the disturbance, \(y^{[i]}_t \in \mathbb{R}^{p_i}\) the measured output, and \(z^{[i]}_t \in \mathbb{R}^{{n_z}_i}\) the performance output.
The overall system dimensions are \(n = \sum_{i=1}^N n_i\), \(m = \sum_{i=1}^N m_i\), \(p = \sum_{i=1}^N p_i\), \({n_w} = \sum_{i=1}^N {n_w}_i\), and \({n_z} = \sum_{i=1}^N {n_z}_i\).
In this work, we assume the state-space matrices in~\eqref{eq:ss_networked_system} are unknown, and we develop a data-driven control design that operates directly on frequency response data without requiring parametric model identification. For a plant $G$ as in~\eqref{eq:ss_networked_system}, the input-output behaviour is captured by its Frequency Response Function (FRF) $\tf{G}$:
\begin{equation*}
    \begin{bmatrix}
        \tf{z} \\ \tf{y}
    \end{bmatrix}
    =
    \underbrace{\begin{bmatrix}
        \tf{G}_{11} & \tf{G}_{12}
        \\
        \tf{G}_{21} & \tf{G}_{22}
    \end{bmatrix}}_{\tf{G}}
    \begin{bmatrix}
        \tf{w} \\ \tf{u}
    \end{bmatrix}\,,
\end{equation*}
where $\tf{G}_{ij} = C_i (z I - A )^{-1} B_j + D_{ij}$.
We can estimate the plant transfer function \(\tf{G}_{22}(e^{j\omega})\) from input-output measurements using standard Fourier analysis. From $m$ experiments with persistently exciting inputs, we obtain \citep{pintelon2012system}:
\begin{equation}\label{eq:plant_from_data}
\tf{G}_{22}(e^{j\omega}) =
\left[
\sum_{k=0}^{N_s-1} Y_k e^{-j\omega T_s k}
\right]
\left[
\sum_{k=0}^{N_s-1} U_k e^{-j\omega T_s k}
\right]^{-1},
\end{equation}
where \(N_s\) is the number of samples per experiment, \(T_s\) is the sampling period, each column of \(U_k \in \mathbb{R}^{m \times m}\) and \(Y_k \in \mathbb{R}^{p \times m}\) contains the input and output at time \(k\) from one experiment, and \(\tf{G}_{22}(e^{j\omega})\) is evaluated at a discrete set of frequencies \(\omega \in \Omega_{N_s} \subset [0,\pi/T_s)\). Moreover, the synthesis process can account for estimation errors due to truncation and noise in the plant’s frequency response. \(\tf{G}_{11}(e^{j\omega})\), \(\tf{G}_{12}(e^{j\omega})\), and \(\tf{G}_{21}(e^{j\omega})\) encode the desired performance specifications and the disturbances to be rejected, and are defined by the user in conjunction with the estimated \(\tf{G}_{22}(e^{j\omega})\). Next, we review data-driven control design approaches that leverage this frequency response information. These methods account for truncation and noise-induced errors inherent in this estimation procedure.

\subsection{Data-Driven Controller Synthesis} \label{subsec:data_driven_classical_results}

We impose the following conditions on the plant $\tf{G}$:
\begin{enumerate}[label=(A\arabic*), leftmargin=3em]
    \item \label[assumption]{assumption:1b} $\tf{G}_{12}(e^{j\omega})$ has full column rank $\forall\omega\in\Omega$.
    \item \label[assumption]{assumption:2} $\tf{G}(e^{j\omega})$ is bounded for all $\omega \in \Omega$.
\end{enumerate}
These assumptions are standard in data-driven controller synthesis (e.g., \citep{schuchert_DatadrivenFixedstructureFrequencybased_2024}).
\Cref{assumption:1b} ensures that all control inputs influence the performance channels.
\Cref{assumption:2} excludes plants with poles on the stability boundary. This technical condition simplifies the theoretical development but can be relaxed using small detours along the Nyquist contour \citep[Remark 1]{schuchert_DatadrivenFixedstructureFrequencybased_2024}.

We seek to design controllers that ensure closed-loop stability while optimising performance under fixed communication constraints. Motivated by optimality results for closed-loop norm minimisation \citep{zhou1998essentials}, we consider linear dynamical output-feedback controllers of the form $\tf{u} = \tf{K} \tf{y}$, where the controller structure is restricted to a prescribed class:
\begin{equation*}
    \tf{K} \in \mathcal{K} \subseteq \mathcal{R}_p^{ n_o \times n_i},
\end{equation*}
and $ \mathcal{K}$ denotes the set of admissible fixed-structure distributed controllers. This structure encodes communication constraints that dictate which measurements are available to which control actuators. 
\begin{exmp}
    \label[example]{example:distributed_controller}
    Consider controllers for 3 subsystems with a communication graph as in \Cref{fig:example1}. The admissible set $\mathcal{K}$ for the ${3 \times 3}$ controller takes the form:
    \begin{equation} \label{eq:example_K_sparse_prob_form}
        \mathcal{K} = \ab\{
            \tf{K} \in \mathcal{R}_p^{3\times 3} 
            \ \middle\mid \ 
            \tf{K} =
            \begin{bmatrix}
            \bullet & \bullet & \tf{0} \\
            \bullet & \bullet & \bullet\\
            \tf{0}  & \bullet & \bullet
            \end{bmatrix}
        \},
    \end{equation}
    where $\bullet$ denotes an arbitrary transfer function and $\tf{0}$ indicates prohibited information flow. Note that this framework naturally accommodates communication delays by incorporating delay operators in the structure of $\tf{K}$. 
\end{exmp}
\begin{figure}
    \centering
    \begin{tikzpicture}[
        auto,  
        node distance=4em,
        >=latex,
        thick,
    ]
    \definecolor{nodefill_color}{RGB}{251, 231, 207}
    \definecolor{nodedraw_color}{RGB}{206, 157, 53}
    \definecolor{connection_color}{RGB}{115, 141, 211}
    \tikzset{
        node/.style={
            circle, 
            fill=nodefill_color, draw=nodedraw_color, text=black, 
            minimum height=2em, minimum width=2em, 
            },
    }
    
    \node[node] (N1) {$1$};
    \node[node, right of=N1] (N2) {$2$};
    \node[node, right of=N2] (N3) {$3$};

    \foreach \k in {2,...,3}{
        \pgfmathsetmacro{\prev}{int(\k - 1)};
        \draw[->, connection_color, very thick, out=30, in=150]
            (N\prev.north east) to (N\k.north west);
        \draw[<-, connection_color, very thick, out=-30, in=-150]
            (N\prev.south east) to (N\k.south west);
    }

\end{tikzpicture}
    \caption{Communication graph for \Cref{example:distributed_controller}.}
    \label{fig:example1}
\end{figure}

For zero initial conditions ${x_0 = 0}$ and bounded disturbances $w$ with ${\norm{w}_2 < \infty}$, we measure controller performance through the induced $\ell_2$ norm of the performance output: ${\norm{{z}}_2^2}$. Define the closed-loop transfer function matrix from $w$ to $z$ as the lower fractional transformation (LFT) of $\tf{G}$ \citep{zhou1998essentials}:
\begin{equation}\label{eq:definition_LFT}
\tf{T}_{zw} \coloneq \tf{G}_{11} + \tf{G}_{12} \tf{K}(I - \tf{G}_{22}\tf{K})^{-1} \tf{G}_{21}\,.
\end{equation}
The cost ${\norm{{z}}_2}$ can be expressed as a function of $\tf{K}$ and the disturbance $w$. Specifically, we define the cost function $J({w},\tf{K}) \coloneq \norm{z}_2^2$.
Since the disturbance $w$ is unmeasurable and unknown during controller design, minimising $J(w,\tf{K})$ for all $w$ is infeasible \citep{goel2023regret}. Classical control theory addresses this challenge by minimising ${\mathbb{E}_{ w \sim \mathcal{N}(0 , I )} [J (w, \tf{K}) ] }$ and ${\max_{\norm{w}_2 \leq 1} [J (w, \tf{K})]}$, which correspond to the $\H2$ and $\Hinfty$ norms of $\tf{T}_{zw}$, respectively. 
The classical optimal distributed control problem is therefore:
\begin{equation} \label{eq:classical_optimal_distributed_control_problem}
    \min_{ \tf{K} \in \mathcal{C}_{\text{stab}} \cap \mathcal{K} } \,  \norm{\tf{T}_{zw}}_{2/\infty}\,,
\end{equation}
where $\mathcal{C}_{\text{stab}} \coloneqq \ab\{\tf{K} \mid \tf{K} \text{ stabilises } \tf{G}_{22}\}$ denotes the set of all stabilising controllers.
We synthesise $\mathcal{H}_2$ and $\mathcal{H}_\infty$ controllers using the frequency-domain approach of \citet{schuchert_DatadrivenFixedstructureFrequencybased_2024}. For a stabilising controller, the system norms admit the frequency-domain representations:
\begin{align*}
    \norm{\tf{T}_{zw}}_2^2 
        &= \frac{1}{2\pi} \int_\Omega \operatorname{trace}\ab( \tf{T}_{zw}^*(e^{j\omega}) \tf{T}_{zw}(e^{j\omega}) ) d\omega \,,
    \\
    \norm{\tf{T}_{zw}}_\infty^2 
        &= \sup\limits_{\omega\in\Omega}\, \overline{\sigma} \ab( \tf{T}_{zw}^*(e^{j\omega}) \tf{T}_{zw}(e^{j\omega}) ) \,,
\end{align*}
where $\overline{\sigma}(\cdot)$ denotes the maximum singular value. Then, problem~\eqref{eq:classical_optimal_distributed_control_problem} can be formulated as the minimisation of an upper bound on the system norms 
\begin{IEEEeqnarray*}{l'rCl'r} 
    \IEEEeqnarraymulticol{5}{c}{
        \min\limits_{\Gamma, \, \tf{K} \in \mathcal{C}_{\text{stab}} \cap \mathcal{K} }\, \gamma
    } \IEEEyesnumber %
\\ \text{s.t.,} 
    & \tf{T}^*_{zw}(e^{j\omega}) \tf{T}_{zw}(e^{j\omega}) &\preceq& \Gamma(e^{j\omega}), & \forall\omega\in\Omega,
\end{IEEEeqnarray*}
where $\Gamma(e^{j\omega})$ is a Hermitian matrix. For $\mathcal{H}_\infty$ minimisation, we set $ \Gamma(e^{j\omega})=\gamma I$ with $\gamma \in \mathbb{R}$. For $\mathcal{H}_2$ minimisation, we have:
\begin{IEEEeqnarray}{rCl}
    \gamma &=& \frac{1}{2\pi}\int_\Omega \text{trace}\left( \Gamma(e^{j\omega}) \right) d\omega.
\end{IEEEeqnarray}

Following \citet{schuchert_DatadrivenFixedstructureFrequencybased_2024}, we express the controller as a left factorisation $\tf{K} = \tf{Y}^{-1} \tf{X}$ and solve a sequence of convex optimisations, where each iteration uses the stabilising controller ${\tf{K}_c = \tf{Y}_c^{-1} \tf{X}_c }$ from the previous step:
\begin{IEEEeqnarray*}{rCl}
    \IEEEeqnarraymulticol{3}{c}{
        \min\limits_{\Gamma, \, \tf{X} \in \mathcal{X}, \, \tf{Y} \in \mathcal{Y} }\,
            {\gamma}
    }
    \IEEEyesnumber \label{eq:datadriven_H2_Hinf}
\\
    \bmat{
        \Gamma - \ab(\tf{\Psi} \tf{G}_{11})^* \ab(\tf{\Psi} \tf{G}_{11})
            & \ab( \tf{\Phi} \tf{G}_{11} + X \tf{G}_{21} )^* \\
        \ab( \tf{\Phi} \tf{G}_{11} + X \tf{G}_{21} )
            & \tf{\Phi}_c \tf{\Phi}^* + \tf{\Phi} \tf{\Phi}_c^* - \tf{\Phi}_c \tf{\Phi}_c^*
    } &\succeq& 0, \,\forall\omega
\end{IEEEeqnarray*}
where:
\begin{align*}
    \tf{\Phi} &\coloneq \ab( \tf{Y} - \tf{X} \tf{G}_{22} ) \tf{G}_{12}^\L,
\\
    \tf{\Phi}_c &\coloneq \ab( \tf{Y}_c - \tf{X}_c \tf{G}_{22} ) \tf{G}_{12}^\L,
\\        
    \tf{\Psi} &\coloneq I - \tf{\Phi}^\R \tf{\Phi} = I - \tf{G}_{12} \tf{G}_{12}^\L.
\end{align*}
Here, the requirement that ${\tf{K} \in \mathcal{K}}$ is ensured by the appropriate selection of the classes ${\mathcal{X}\subseteq\mathcal{RH}_\infty}$ and ${\mathcal{Y}\subseteq\mathcal{RH}_\infty}$. We give an example of one such class from \citet{schuchert_DatadrivenFixedstructureFrequencybased_2024}. Consider the same set of ${3 \times 3}$ controllers in \Cref{example:distributed_controller}, then the classes $\mathcal{X}$ and $\mathcal{Y}$ can be chosen as follows:
\begin{subequations}
\begin{align}
    \mathcal{Y} &= \ab\{ \tf{Y} \mid \tf{Y} = C_Y(z I - A_Y)^{-1} B_Y + D_Y \} \\
    \mathcal{X} &= \ab\{ \tf{X} \mid \tf{X} = C_X(z I - A_X)^{-1} B_X + D_X \}
\end{align}
\end{subequations}
with:
\begin{align*}
    A_Y = A_X &= \bmat{
        \diamond & 0 & 0 \\
        0 & \diamond & 0 \\
        0 & 0 & \diamond \\
    } &
    C_Y = C_X &= \bmat{
        \diamond & 0 & 0 \\
        0 & \diamond & 0 \\
        0 & 0 & \diamond \\
    } \\
    B_Y &= \bmat{
        \bullet & 0 & 0 \\
        0 & \bullet & 0 \\
        0 & 0 & \bullet \\
    } &
    D_Y &= \bmat{
        \bullet & 0 & 0 \\
        0 & \bullet & 0 \\
        0 & 0 & \bullet \\
    } \\
    B_X &= \bmat{
        \bullet & \bullet & 0       \\
        \bullet & \bullet & \bullet \\
        0       & \bullet & \bullet \\ 
    } &
    D_X &= \bmat{
        \bullet & \bullet & 0       \\
        \bullet & \bullet & \bullet \\
        0       & \bullet & \bullet \\ 
    }
\end{align*}
Here, since we want $\mathcal{X}$ and $\mathcal{Y}$ to be linearly parametrised, $\bullet$~are chosen as the optimisation variables of appropriate dimensions, and $\diamond$~as some fixed matrices of appropriate dimensions.

It can be shown that the closed-loop stability is preserved throughout the iterations: the synthesised controller ${\tf{K} \in \mathcal{C}_{\text{stab}}}$ if the initial controller ${\tf{K}_c \in \mathcal{C}_{\text{stab}}}$ \citep{schuchert_DatadrivenFixedstructureFrequencybased_2024}. For a stable plant $\tf{G}_{22}$, an initial stabilising controller can be obtained by a sufficiently small gain. For unstable plants, a stabilising controller is required for system identification and serves as the initial iterate. However, it is important to emphasise that this closed-loop stability condition is only a sufficient condition, i.e., it represents a subset of $\mathcal{C}_{\text{stab}}$.

\subsection{Spatial regret}
While classical $\H2$ and $\Hinfty$ metrics provide valuable design objectives, they are inherently graph-agnostic: they treat all disturbances equally, irrespective of the network constraints imposed on the controller.

We adopt a different design philosophy introduced by \citet{martinelli2024closing, martinelli2025spatialregret}: prioritise robustness against disturbances for which information limitations constitute the primary bottleneck. To formalise this principle, we compare performance against an oracle controller $\tf{\hat{K}}$ with enhanced communication topology, where $\tf{\hat{K}} \in \mathcal{C}_{\text{stab}} \cap \hat{\mathcal{K}}$ and $\hat{\mathcal{K}} \supseteq \mathcal{K}$. The superset $\hat{\mathcal{K}}$ represents a relaxation of the original structural constraint set $\mathcal{K}$. If $\tf{\hat{K}}$ is designed to minimise a closed-loop norm over $\hat{\mathcal{K}}$, this richer information structure may allow the oracle to outperform any distributed controller constrained to $\mathcal{K}$. The \emph{spatial regret} metric quantifies the performance gap as:
\begin{equation}\label{eq:def_spatial_regret_cost}
        \spreg(\tf{K},\tf{\hat{K}}) 
        \coloneq
        \max_{\norm{{w}}_2 \leq 1} 
        \left[
        J({w},\tf{K}) - J({w},\tf{\hat{K}}) \right]\,,
\end{equation}
where 
$J({w},\tf{\hat{K}})$ denotes the oracle's performance, with:
\begin{equation}\label{eq:def_oracle_LFT}
\tf{\hat{T}}_{zw} \coloneq \tf{G}_{11} + \tf{G}_{12} \tf{\hat{K}}(I - \tf{G}_{22}\tf{\hat{K}})^{-1} \tf{G}_{21}\,.
\end{equation}
The performance gap $\left[J(w,\tf{K}) - J(w,\tf{\hat{K}})\right]$ quantifies the performance loss due to information constraints for a disturbance $w$, and the spatial regret captures the worst-case degradation over all bounded disturbances. Given an oracle $\tf{\hat{K}}$, we synthesise the distributed controller $\tf{K}$ that minimises spatial regret while preserving closed-loop stability:
\begin{equation} \label{eq:min_problem_spregret_generic_time}
    \min_{\tf{K}\in \mathcal{C}_{\textnormal{stab}} \cap \mathcal{K}} \spreg(\tf{K},\tf{\hat{K}})\,.
\end{equation}
Although spatial regret is naturally defined in the time domain via~\eqref{eq:def_spatial_regret_cost}, the following result establishes an equivalent frequency-domain characterisation that is directly amenable to data-driven synthesis.
\begin{lem}[\citet{martinelli2025spatialregret}]\label[lem]{lem:spreg_frequency_domain}
    Let $\tf{K}, \tf{\hat{K}} \in \mathcal{R}_{p}^{n_o \times n_i}$, then:
    \begin{equation} \label{eq:sup_lambda_max_Psi}
    \spreg(\tf{K}, \tf{\hat{K}}) = \sup_{\omega \in \Omega} \lambda_{\text{max}}
    ( \tf{\Lambda} (e^{j\omega}) ),
    \end{equation}
    where $\lambda_{\text{max}}(\tf{\Lambda}(e^{j\omega}))$ denotes the largest eigenvalue of:
\begin{equation}\label{eq:def_Psi_ejomega}
    \tf{\Lambda} (e^{j\omega}) \coloneq \tf{T}_{zw}^*(e^{j\omega}) \tf{T}_{zw}(e^{j\omega})-\tf{\hat{T}}_{zw}^*(e^{j\omega}) \tf{\hat{T}}_{zw}(e^{j\omega})\,.
\end{equation}
\end{lem}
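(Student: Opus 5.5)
The plan is to convert the time-domain difference of quadratic costs in~\eqref{eq:def_spatial_regret_cost} into a single quadratic form in $w$ through Parseval's theorem. This quadratic form is defined by the Hermitian multiplier $\tf{\Lambda}$, and the supremum of such a form over the unit ball equals the essential supremum of the largest eigenvalue of the multiplier.

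\textbf{Step 1 (reduction to a multiplier).} Both closed loops are stable in the setting of the lemma, so $\tf{T}_{zw}$ and $\tf{\hat{T}}_{zw}$ lie in $\mathcal{RH}_\infty$ and induce bounded operators on $\ell_2$. If stability is not assumed, both sides are interpreted as $+\infty$. For $x_0=0$ we have $\tf{z} = \tf{T}_{zw}\tf{w}$. Parseval's identity then gives
\[
J(w,\tf{K}) - J(w,\tf{\hat{K}}) = \frac{1}{2\pi}\int_{-\pi}^{\pi} \tf{w}^*(e^{j\omega})\,\tf{\Lambda}(e^{j\omega})\,\tf{w}(e^{j\omega})\,d\omega,
\]
where $\norm{w}_2^2 = \frac{1}{2\pi}\int \norm{\tf{w}(e^{j\omega})}^2 d\omega$. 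By real-rationality, $\tf{\Lambda}(e^{-j\omega}) = \overline{\tf{\Lambda}(e^{j\omega})}$, so eigenvalues at $\pm\omega$ coincide. The supremum over $[-\pi,\pi)$ therefore equals the supremum over $\Omega$ (taking $T_s=1$ without loss of generality). Moreover, $\tf{\Lambda}$ is Hermitian and continuous on the unit circle.

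\textbf{Step 2 (upper bound).} Let $\lambda^\star := \sup_{\omega\in\Omega}\lambda_{\max}(\tf{\Lambda}(e^{j\omega}))$. Pointwise we have $\tf{w}^*\tf{\Lambda}\tf{w} \le \lambda^\star \norm{\tf{w}}^2$. Integrating gives $J(w,\tf{K})-J(w,\tf{\hat{K}}) \le \lambda^\star \norm{w}_2^2$. The maximisation in~\eqref{eq:def_spatial_regret_cost} is over the ball $\norm{w}_2\le 1$, which includes $w=0$, so the value is at least $0$. Hence the regret is at most $\max(\lambda^\star,0)$.

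\textbf{Step 3 (lower bound and the sign issue).} This is the main obstacle. It requires showing that $\lambda^\star$ is approached by real $\ell_2$ signals, and handling the case $\lambda^\star<0$.

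For approachability, fix $\omega_0$ with $\lambda_{\max}(\tf{\Lambda}(e^{j\omega_0})) > \lambda^\star-\epsilon$ and a unit eigenvector $v$. I would build a real signal $w$ whose spectrum is concentrated in small neighbourhoods of $\pm\omega_0$, with direction $v$ near $\omega_0$ and $\bar v$ near $-\omega_0$. A concrete choice is a windowed sinusoid $w_t = \Re(v e^{j\omega_0 t})\,h_t$ with a long smooth window $h$, normalised to unit norm. Continuity of $\tf{\Lambda}$ makes the quadratic form exceed $\lambda^\star-2\epsilon$ for a sufficiently long window. The endpoint cases $\omega_0=0$ or $\pi$ use a real eigenvector, which exists because $\tf{\Lambda}$ is real there.

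For the sign, since the max in~\eqref{eq:def_spatial_regret_cost} is over $\norm{w}_2\le1$, the value equals $\max(\lambda^\star,0)$. So~\eqref{eq:sup_lambda_max_Psi} holds exactly when $\lambda^\star\ge 0$. I would argue that this holds whenever the oracle is norm-optimal over $\hat{\mathcal{K}}\supseteq\mathcal{K}$. Alternatively, I would read the definition as the maximum over the unit sphere $\norm{w}_2=1$, which by homogeneity yields~\eqref{eq:sup_lambda_max_Psi} with no sign restriction. I would state this convention explicitly in the proof.
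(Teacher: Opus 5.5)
The paper does not prove this lemma; it imports it from \citet{martinelli2025spatialregret}, so there is no in-paper argument to compare with. Your argument is the standard proof and it is correct. You use Parseval to turn the cost difference into the quadratic form of the Hermitian multiplier $\tf{\Lambda}$, the pointwise eigenvalue bound for the upper estimate, and windowed sinusoids concentrating at $\pm\omega_0$ along a top eigenvector (real at $\omega_0\in\{0,\pi\}$) for the lower estimate. Even a rectangular window of length $N$ suffices, because its squared DTFT magnitude divided by $N$ is the Fej\'er kernel. For $\omega_0\in(0,\pi)$, the cross terms between the two spectral bumps vanish as $N\to\infty$, in both the numerator and the normalisation.

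\textbf{The sign issue.} Your observation is right, and it is a defect of the statement rather than of your proof. With the closed ball, $w=0$ gives $\spreg\ge 0$, so the identity fails whenever $\lambda^\star<0$. This happens, for example, if $\tf{T}_{zw}^*\tf{T}_{zw}\prec\tf{\hat T}_{zw}^*\tf{\hat T}_{zw}$ on the whole circle, which the hypotheses allow. The same reading makes Definition~\ref{def:wellposedness} vacuous, and it makes the contradiction hypothesis in the proof of Lemma~\ref{lemma:well_posedness} false at $w=0$. So the unit-sphere (equivalently, normalised-ratio) convention is the one the paper actually needs.

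Under that convention, Lemma~\ref{lemma:well_posedness} is exactly your option (a). It is easy to verify for $\tf{K}\in\mathcal{C}_{\text{stab}}\cap\hat{\mathcal{K}}$. For an $\mathcal{H}_\infty$ oracle, evaluate $\tf{\Lambda}$ at the peak frequency of $\tf{T}_{zw}$ along its top right singular vector. For an $\mathcal{H}_2$ oracle, $\int_\Omega\operatorname{trace}\tf{\Lambda}\,d\omega\ge 0$ forces $\operatorname{trace}\tf{\Lambda}(e^{j\omega})\ge 0$ at some $\omega$.

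\textbf{Two small repairs.}
\begin{enumerate}
\item Drop the claim that without stability both sides are $+\infty$. The right-hand side is built from frequency responses and is generally finite for an unstable loop with no poles on the unit circle. Instead, add the hypothesis that $\tf{K}$ and $\tf{\hat K}$ are stabilising, which is how the lemma is used in the paper.
\item Read the $\max$ in \eqref{eq:def_spatial_regret_cost} as a $\sup$. Your construction only approaches $\lambda^\star$, and this value is typically not attained by any $\ell_2$ signal.
\end{enumerate}
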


This result enables reformulation of problem~\eqref{eq:min_problem_spregret_generic_time} in the frequency domain:
\begin{equation} \label{eq:min_problem_spregret_generic_freq}
    \min_{\tf{K}\in \mathcal{C}_{\textnormal{stab}} \cap \mathcal{K}} 
    \:
    \sup_{\omega \in \Omega} 
    \:
    \lambda_{\text{max}}
    ( \tf{\Lambda} (e^{j\omega}) )
\end{equation}

Not every oracle, however, yields a meaningful benchmark: we require that the oracle cannot be uniformly outperformed by controllers with more restrictive communication constraints, which would invalidate it as a benchmark. This leads to the following well-posedness requirement.

\begin{defn}\label{def:wellposedness}
Given $\hat{\mathcal{K}} \supseteq \mathcal{K}$, the spatial regret metric is \emph{well-posed} for a given oracle $\tf{\hat{K}} \in  \mathcal{C}_{\text{stab}} \cap \mathcal{\hat{K}} $ if:
\begin{equation} \label{eq:def_wellposedness}
    \spreg(\tf{K},\tf{\hat{K}}) \geq 0\,, \quad \forall\, \tf{K}\in \mathcal{C}_{\text{stab}} \cap \mathcal{K} \,.
\end{equation}
\end{defn}

When condition~\eqref{eq:def_wellposedness} is violated, some constrained controller outperforms the oracle, rendering the benchmark invalid and the regret metric meaningless. For the model-based setting and for some specific controller classes $\mathcal{K}$ and $\hat{\mathcal{K}}$, \citet{martinelli2025spatialregret} provided sufficient conditions for well-posedness and reformulated spatial regret minimisation as a semidefinite program.

In this work, we do not restrict controller classes to those considered in~\cite{martinelli2025spatialregret}, precluding the direct application of their model-based approach. Instead, we develop a data-driven synthesis framework that operates on frequency-response data alone. In the next section, we establish two main contributions. First, we extend the well-posedness result of~\cite{martinelli2025spatialregret} to arbitrary controller classes $\mathcal{K}$ and $\hat{\mathcal{K}}$ (Lemma~\ref{lemma:well_posedness}), removing restrictive assumptions on controller structure. Second, we derive a tractable data-driven synthesis procedure that computes fixed-structure distributed controllers minimising~\eqref{eq:min_problem_spregret_generic_freq} from frequency-response measurements, while ensuring closed-loop stability (\Cref{thm:convexification}).

The generalisation to arbitrary controller structures comes at a trade-off: in contrast to \citet{martinelli2025spatialregret}, our methodology doesn't ensure complete exploration of the stabilising controller set $\mathcal{C}_{\text{stab}}$. In terms of controller structures, our approach utilises the left factorisation, whereas \citet{martinelli2025spatialregret} utilises FIR-based controllers. This makes the direct comparison of their set sizes challenging. Additionally, the synthesis procedure yields a semi-infinite program, which we solve through frequency gridding. While the grid can be refined to arbitrarily high resolution, this increases computational cost accordingly.

\begin{rem}
The spatial regret framework extends naturally to incorporate frequency-domain weighting filters $\tf{W} \in \mathcal{RH}_\infty$ that encode prior knowledge about disturbance spectral characteristics. Replacing $\tf{T}_{zw}$ with $(\tf{T}_{zw}\tf{W})$ in~\eqref{eq:definition_LFT} allows emphasising performance in specific frequency ranges, as usually done in $\mathcal{H}_2$ and $\mathcal{H}_\infty$ design~\citep{zhou1998essentials}.
\end{rem}

\section{Main results} \label{sec:main_results}
We first establish conditions under which the spatial regret metric is well-posed (cf.~\Cref{def:wellposedness}). 
While \citet[Th.~1]{martinelli2025spatialregret} provided well-posedness guarantees for the Youla-Ku\v{c}era parameterisation with specific structured controller sets, we extend this result to the general direct controller parameterisation setting with arbitrary constraint sets $\mathcal{K}$ and $\hat{\mathcal{K}}$.

\begin{lem}\label[lem]{lemma:well_posedness}
Let $\hat{\mathcal{K}} \supseteq \mathcal{K}$ and let the oracle $\tf{\hat{K}}$ be computed as the solution of:
\begin{equation}\label{eq:def_oracle_problem_in_K}
    \min_{ \tf{K} \in \mathcal{C}_{\text{stab}} \cap \mathcal{\hat{K}} }\,
        \norm{\tf{T}_{zw}}_{2/\infty}.
\end{equation}
Then $\spreg (\tf{K}, \tf{\hat{K}}) \geq 0$ for any $\tf K \in \mathcal{C}_{\text{stab}} \cap \mathcal{K}$.
\end{lem}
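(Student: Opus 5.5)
We argue by contradiction, using \Cref{lem:spreg_frequency_domain} to move everything to the frequency domain. Fix $\tf{K} \in \mathcal{C}_{\text{stab}} \cap \mathcal{K}$ and suppose $\spreg(\tf{K},\tf{\hat{K}}) < 0$. By \eqref{eq:sup_lambda_max_Psi}, this means $\sup_{\omega\in\Omega}\lambda_{\max}(\tf{\Lambda}(e^{j\omega})) < 0$. Hence there is $\epsilon>0$ such that
\begin{equation*}
\tf{T}_{zw}^*(e^{j\omega})\tf{T}_{zw}(e^{j\omega}) \preceq \tf{\hat{T}}_{zw}^*(e^{j\omega})\tf{\hat{T}}_{zw}(e^{j\omega}) - \epsilon I, \quad \forall \omega\in\Omega.
\end{equation*}
In particular, $\tf{T}_{zw}^*\tf{T}_{zw} \prec \tf{\hat{T}}_{zw}^*\tf{\hat{T}}_{zw}$ strictly and uniformly in frequency.

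Next we turn this matrix inequality into a strict inequality between the two system norms, treating each norm separately.

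\emph{$\mathcal{H}_2$ case.} Taking traces gives $\operatorname{trace}(\tf{T}_{zw}^*\tf{T}_{zw}) \le \operatorname{trace}(\tf{\hat{T}}_{zw}^*\tf{\hat{T}}_{zw}) - \epsilon n_w$ pointwise. Integrating over $\Omega$ with the frequency-domain formula for the $\mathcal{H}_2$ norm yields $\norm{\tf{T}_{zw}}_2^2 < \norm{\tf{\hat{T}}_{zw}}_2^2$. Both integrals are finite because $\tf{K}$ and $\tf{\hat{K}}$ are stabilising, so the closed loops lie in $\mathcal{RH}_\infty$.

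\emph{$\mathcal{H}_\infty$ case.} For every $\omega$ we have $\overline{\sigma}(\tf{T}_{zw}^*\tf{T}_{zw}) \le \overline{\sigma}(\tf{\hat{T}}_{zw}^*\tf{\hat{T}}_{zw}) - \epsilon$, using monotonicity of $\lambda_{\max}$ on Hermitian matrices. Taking the supremum gives $\norm{\tf{T}_{zw}}_\infty^2 \le \norm{\tf{\hat{T}}_{zw}}_\infty^2 - \epsilon$. Here the strictness comes from the uniform margin $\epsilon$. The main subtlety of the whole proof is ensuring this margin survives the supremum, rather than only having a pointwise strict inequality, and the uniform $\epsilon$ extracted above is exactly what handles it.

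Finally, since $\mathcal{K} \subseteq \hat{\mathcal{K}}$, the controller $\tf{K}$ belongs to $\mathcal{C}_{\text{stab}} \cap \hat{\mathcal{K}}$ and is therefore feasible for problem \eqref{eq:def_oracle_problem_in_K}. Its cost is strictly smaller than the cost of $\tf{\hat{K}}$, which contradicts the optimality of $\tf{\hat{K}}$. Hence $\spreg(\tf{K},\tf{\hat{K}})\ge 0$ for every $\tf{K} \in \mathcal{C}_{\text{stab}} \cap \mathcal{K}$.

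The contradiction is not strictly necessary: it is enough to read it contrapositively. If $\spreg<0$, then $\tf{K}$ strictly improves the oracle objective, which is impossible. The only ingredients are the monotonicity of trace and $\lambda_{\max}$ under the Loewner order, together with the nesting $\mathcal{K} \subseteq \hat{\mathcal{K}}$.
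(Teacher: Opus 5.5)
Your proof is correct, but it takes a different route from the paper's. You go through \Cref{lem:spreg_frequency_domain}. From $\spreg(\tf{K},\tf{\hat K})<0$ you get a uniform margin $\tf{\Lambda}(e^{j\omega})\preceq-\epsilon I$. Loewner monotonicity of the trace and of $\lambda_{\max}$ then shows that $\tf{K}$, which is feasible for \eqref{eq:def_oracle_problem_in_K}, strictly decreases the oracle's own objective, contradicting optimality.

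The paper never uses the frequency-domain characterisation. It stays in the time domain and first proves that every $\tf{\bar K}\in\mathcal{C}_{\text{stab}}\cap\hat{\mathcal{K}}$ is matched or beaten by the oracle on some single disturbance. For $\mathcal{H}_2$ it sums impulse responses over the disturbance channels and picks a channel on which the oracle does no worse. For $\mathcal{H}_\infty$ it evaluates both controllers at a worst-case disturbance $w^*$ of $\tf{\bar K}$. It then restricts to $\mathcal{K}\subseteq\hat{\mathcal{K}}$.

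What each approach buys:
\begin{itemize}
\item The paper's version is self-contained and exhibits an explicit witness disturbance.
\item However, its $\mathcal{H}_\infty$ branch presupposes that a maximiser $w^*$ exists in the $\ell_2$ unit ball. For rational closed loops this typically fails, since the $\mathcal{H}_\infty$ norm is only approached by increasingly narrowband inputs.
\item Your uniform $\epsilon$ is exactly what removes the need for attainment, at the price of depending on \Cref{lem:spreg_frequency_domain}.
\end{itemize}

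As a side remark, $\spreg$ is literally defined as a maximum over the closed ball $\norm{w}_2\le 1$, so the choice $w=0$ already gives $\spreg\ge 0$. Both arguments are therefore really about the substantive statement $\sup_{\omega}\lambda_{\max}(\tf{\Lambda}(e^{j\omega}))\ge 0$, and your frequency-domain argument proves that statement directly.
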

\begin{pf}
Given $\tf{\hat{K}}$ the solution of~\eqref{eq:def_oracle_problem_in_K}, we prove that:
\begin{equation}\label{eq:th_1_step_1}
        \forall \tf{\bar{K}} \in \mathcal{C}_{\text{stab}}\cap \hat{\mathcal{K}}, \:\:\exists\norm{{w}}_2\leq 1 \: : \:J( {w} ,\tf{\bar{K}}) \geq 
        J({w},\tf{\hat{K}})\,.
    \end{equation}
We proceed by contradiction. Suppose there exists $\tf{\bar{K}} \in \mathcal{C}_{\text{stab}}\cap \hat{\mathcal{K}}$ such that:
    \begin{equation}\label{eq:contradicted_sentence_well_posed}
        J({w},\tf{\bar{K}}) < J({w},\tf{\hat{K}}), \quad \forall \norm{{w}}_2\leq 1.
    \end{equation} 
If the oracle minimises the $\H2$ norm, it achieves optimality against the orthonormal set of impulse disturbances~\cite[Ch.~4.3]{zhou1998essentials}:
    \begin{equation*}
       \norm{\tf{\hat{T}}_{zw} }_{2}^2 = \sum_{i=1}^{m}\norm{\hat{z}_i}_2^2,
       \end{equation*}
where $\hat{z}_i$ is the performance output when an impulse is applied to the $i^{th}$ disturbance channel of~\eqref{eq:def_oracle_LFT}. Since $\tf{\hat{K}}$ is $\H2$-optimal, it follows that:
    \begin{equation}\label{eq:step_proof_h2_oracle_is_well_posed}
        \sum_{i=1}^{m} \norm{\hat{z}_i}_2^2 \leq \sum_{i=1}^{m} \norm{\bar{z}_i}_2^2,
    \end{equation}
where $\bar{z}_i$ is the performance output when the same impulse is applied using $\tf{\bar{K}}$. From~\eqref{eq:step_proof_h2_oracle_is_well_posed}, there must exist an index $i^* \in \{1, \ldots, m\}$ such that $\|{\hat{z}}_{i^*}\|_2^2 \leq \|{\bar{z}}_{i^*}\|_2^2$. Therefore, for the impulse disturbance applied to channel $i^*$, we have $J({w}_{i^*}, \tf{\hat{K}}) \leq J({w}_{i^*}, \tf{\bar{K}})$, which contradicts~\eqref{eq:contradicted_sentence_well_posed}.

Let $w^* \in \arg\max_{\lVert {w} \rVert_{2} \leq 1}J({w}, \tf{\bar{K}})$. If the oracle is $\Hinfty$ optimal, then for any $\norm{w}_2 \leq 1$:
\begin{equation}\label{eq:intermediate_step_Hinf_proof_well_posed}
        J(w^*, \tf{\bar{K}})
        = 
        \max_{\lVert {w} \rVert_{2} \leq 1}J({w}, \tf{\bar{K}})
        \geq 
        \max_{\lVert {w} \rVert_{2} \leq 1}J({w}, \tf{\hat{K}})
        \geq
        J(w^*, \tf{\hat{K}}),
    \end{equation}
which contradicts~\eqref{eq:contradicted_sentence_well_posed}.
Hence,~\eqref{eq:th_1_step_1} holds for any oracle obtained as in~\eqref{eq:def_oracle_problem_in_K}. Since $\mathcal{C}_{\text{stab}}\cap \mathcal{K} \subseteq \mathcal{C}_{\text{stab}}\cap \hat{\mathcal{K}}$, it follows that:
    \begin{equation}\label{eq:last_step_firts_preposition}
        \forall \tf{K} \in \mathcal{C}_{\text{stab}}\cap \mathcal{K}, \:\:\exists\norm{{w}}_2\leq 1 \: : \:J( {w} ,\tf{K}) \geq 
        J({w},\tf{\hat{K}})\,.
     \end{equation}
Using the definition~\eqref{eq:def_spatial_regret_cost} of spatial regret,~\eqref{eq:last_step_firts_preposition} establishes that $\spreg(\tf{K}, \tf{\hat{K}}) \geq 0$ for any $\tf{K} \in \mathcal{C}_{\text{stab}} \cap \mathcal{K}$.
\qed
\end{pf}

\Cref{lemma:well_posedness} establishes that oracles designed to minimise classical $\H2$ or $\Hinfty$ norms over the enhanced constraint set $\hat{\mathcal{K}}$ guarantee well-posedness of the spatial regret metric, providing a principled approach for oracle selection in practice. 
Practically, in a data-driven manner, the oracle $\tf{\hat{K}}$ can be computed by leveraging the methods from \Cref{subsec:data_driven_classical_results}, in particular by solving problem~\eqref{eq:datadriven_H2_Hinf} where we enforce that $\tf{\hat{K}} \in \mathcal{C}_{\text{stab}} \cap \hat{\mathcal{K}}$.

With a well-posed oracle $\tf{\hat{K}}$ in hand, we now develop the data-driven synthesis framework for controllers that minimise the spatial regret metric.

\begin{lem} \label{th:initial_reformulation_spreg}
    Consider an oracle ${\tf{\hat{K}} \in \mathcal{C}_{\text{stab}} \cap \hat{\mathcal{K}}}$ minimising~\eqref{eq:def_oracle_problem_in_K}.
    The left-factorised controller ${ \tf{K} = \tf{Y}^{-1} \tf{X} }$ which is a feasible solution to the problem~\eqref{eq:min_problem_spregret_generic_freq} can be obtained by solving:
    \begin{IEEEeqnarray*}{rCl'l}
    \IEEEeqnarraymulticol{4}{c}{
        \min \limits_{\gamma, \, \tf{X} \in \mathcal{X}, \, \tf{Y} \in \mathcal{Y}} \, {\gamma}
    }  \IEEEyesnumber \label{eq:base_optimisation}
    \\ 
        \bmat{
        \tf{\hat{\Gamma}} - \ab(\tf{\Psi} \tf{G}_{11})^* \ab(\tf{\Psi} \tf{G}_{11})
            & \ab( \tf{\Phi} \tf{G}_{11} + X \tf{G}_{21} )^* \\
        \ab( \tf{\Phi} \tf{G}_{11} + X \tf{G}_{21} )
            & \tf{\Phi} \tf{\Phi}^*
        } (e^{j\omega}) &\succeq& 0, & \forall\omega
    \\
    \IEEEeqnarraymulticol{4}{c}{
        \tf{Y}^{-1} \tf{X} \in \mathcal{C}_{\text{stab}}
    }
    \end{IEEEeqnarray*}
    where ${\mathcal{X}\subseteq\mathcal{RH}_\infty}$ and ${\mathcal{Y}\subseteq\mathcal{RH}_\infty}$ are chosen such that ${\tf{K}  \in \mathcal{K}}$, and:
    \begin{align*}
        \tf{\Phi} & \coloneq \ab( \tf{Y} - \tf{X} \tf{G}_{22} ) \tf{G}_{12}^\L ,
    \\        
        \tf{\Psi} & \coloneq I - \tf{\Phi}^\R \tf{\Phi} = I - \tf{G}_{12} \tf{G}_{12}^\L,
    \\
        \tf{\hat{\Gamma}} & \coloneq \gamma I + \tf{\hat{T}}_{zw}^* \tf{\hat{T}}_{zw},
    \end{align*}
    with $\tf{\hat{T}}_{zw}$  defined in~\eqref{eq:def_oracle_LFT}.
\end{lem}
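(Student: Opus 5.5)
The idea is to show that the matrix inequality in~\eqref{eq:base_optimisation} is, frequency by frequency, equivalent to $\tf{\Lambda}(e^{j\omega}) \preceq \gamma I$. Combined with \Cref{lem:spreg_frequency_domain}, this gives $\spreg(\tf{K},\tf{\hat{K}}) = \sup_\omega \lambda_{\max}(\tf{\Lambda}) \le \gamma$. Minimising $\gamma$ subject to $\tf{Y}^{-1}\tf{X} \in \mathcal{C}_{\text{stab}}$ and $\tf{X}\in\mathcal{X}$, $\tf{Y}\in\mathcal{Y}$ (so $\tf{K}\in\mathcal{K}$) then returns a feasible solution of~\eqref{eq:min_problem_spregret_generic_freq}, and in fact its optimal value over the chosen parameterisation. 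The oracle enters only through the known Hermitian term $\tf{\hat{T}}_{zw}^*\tf{\hat{T}}_{zw}$. Therefore the characterisation is the same as the one behind~\eqref{eq:datadriven_H2_Hinf}, with $\Gamma$ replaced by $\tf{\hat{\Gamma}} = \gamma I + \tf{\hat{T}}_{zw}^*\tf{\hat{T}}_{zw}$.

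\textbf{Step 1: rewrite the closed loop.} Substitute $\tf{K}=\tf{Y}^{-1}\tf{X}$ into~\eqref{eq:definition_LFT}. This gives $\tf{K}(I-\tf{G}_{22}\tf{K})^{-1} = (\tf{Y}-\tf{X}\tf{G}_{22})^{-1}\tf{X}$, so
\[
\tf{T}_{zw} = \tf{G}_{11} + \tf{G}_{12}(\tf{Y}-\tf{X}\tf{G}_{22})^{-1}\tf{X}\tf{G}_{21}.
\]
By \Cref{assumption:1b}, $\tf{G}_{12}^\L$ exists. Split $\tf{G}_{11} = \tf{\Psi}\tf{G}_{11} + \tf{G}_{12}\tf{G}_{12}^\L\tf{G}_{11}$. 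Since $\tf{\Phi}\tf{G}_{12} = \tf{Y}-\tf{X}\tf{G}_{22}$, we get
\[
\tf{T}_{zw} = \tf{\Psi}\tf{G}_{11} + \tf{G}_{12}(\tf{\Phi}\tf{G}_{12})^{-1}(\tf{\Phi}\tf{G}_{11} + \tf{X}\tf{G}_{21}).
\]
Here $\tf{\Psi}$ is the orthogonal projector onto the orthogonal complement of the range of $\tf{G}_{12}$, so the cross terms vanish and
\[
\tf{T}_{zw}^*\tf{T}_{zw} = (\tf{\Psi}\tf{G}_{11})^*(\tf{\Psi}\tf{G}_{11}) + (\tf{\Phi}\tf{G}_{11}+\tf{X}\tf{G}_{21})^*(\tf{\Phi}\tf{\Phi}^*)^{-1}(\tf{\Phi}\tf{G}_{11}+\tf{X}\tf{G}_{21}).
\]
For the last term, use the identity $(\tf{\Phi}\tf{G}_{12})^{-*}\tf{G}_{12}^*\tf{G}_{12}(\tf{\Phi}\tf{G}_{12})^{-1} = (\tf{\Phi}\tf{\Phi}^*)^{-1}$. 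It follows from $\tf{\Phi}\tf{G}_{12}\tf{G}_{12}^\L = \tf{\Phi}$, which gives $\tf{\Phi}\tf{\Phi}^* = (\tf{\Phi}\tf{G}_{12})(\tf{G}_{12}^*\tf{G}_{12})^{-1}(\tf{\Phi}\tf{G}_{12})^*$.

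\textbf{Step 2: apply the Schur complement.} Well-posedness of the factorisation requires $\tf{Y}-\tf{X}\tf{G}_{22}$ to be invertible on the unit circle; I would take this to be implied by $\tf{Y}^{-1}\tf{X}\in\mathcal{C}_{\text{stab}}$ together with the interconnection being well defined. Then $\tf{\Phi}\tf{\Phi}^*\succ 0$ at every frequency. By the Schur complement with respect to this block, the LMI in~\eqref{eq:base_optimisation} is equivalent to
\[
\tf{\hat{\Gamma}} - \tf{T}_{zw}^*\tf{T}_{zw} \succeq 0,
\quad\text{that is,}\quad
\tf{T}_{zw}^*\tf{T}_{zw} - \tf{\hat{T}}_{zw}^*\tf{\hat{T}}_{zw} \preceq \gamma I .
\]
This says $\lambda_{\max}(\tf{\Lambda}(e^{j\omega}))\le\gamma$ for all $\omega$.

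\textbf{Step 3: conclude, using the oracle assumption.} Taking the supremum over $\omega$ and invoking \Cref{lem:spreg_frequency_domain} gives $\spreg\le\gamma$. The equivalence is exact at each frequency, so the optimal $\gamma$ equals the minimal regret over the parameterised class. The assumption that $\tf{\hat{K}}$ solves~\eqref{eq:def_oracle_problem_in_K} gives, via \Cref{lemma:well_posedness}, that the optimum satisfies $\gamma\ge 0$, so the objective is meaningful. Stability and the structure $\tf{K}\in\mathcal{K}$ hold by the explicit constraints and the choice of $\mathcal{X},\mathcal{Y}$. Together these show feasibility for~\eqref{eq:min_problem_spregret_generic_freq}.

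\textbf{Main obstacle.} The delicate step is the algebra in Step 1: the orthogonal splitting and the identity linking $(\tf{\Phi}\tf{G}_{12})^{-1}$ to $(\tf{\Phi}\tf{\Phi}^*)^{-1}$ when $\tf{G}_{12}$ is tall but not square. I would verify this by writing $\tf{G}_{12}$ through a unitary completion $[\tf{G}_{12}(\tf{G}_{12}^*\tf{G}_{12})^{-1/2},\ \tf{U}_\perp]$. A secondary point is justifying invertibility of $\tf{\Phi}\tf{\Phi}^*$ on the whole grid, which I would tie to the stability constraint as in Step 2.
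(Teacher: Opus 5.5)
Your proposal is correct and follows essentially the paper's route: the epigraph form via the frequency-domain characterisation of spatial regret, then rewriting $\tf{T}_{zw}$ for $\tf{K}=\tf{Y}^{-1}\tf{X}$ (your $\tf{G}_{12}(\tf{\Phi}\tf{G}_{12})^{-1}$ is exactly the paper's $\tf{\Phi}^\R$), eliminating the cross terms through $\tf{\Psi}\tf{G}_{12}=0$, and finally a Schur complement on the $\tf{\Phi}\tf{\Phi}^*$ block. Your explicit requirement that $\tf{Y}-\tf{X}\tf{G}_{22}$ be invertible on the unit circle is more careful than the paper, which attributes nonsingularity of $\tf{\Phi}\tf{\Phi}^* = (\tf{Y}-\tf{X}\tf{G}_{22})(\tf{G}_{12}^*\tf{G}_{12})^{-1}(\tf{Y}-\tf{X}\tf{G}_{22})^*$ to the full column rank of $\tf{G}_{12}$ alone; note, however, that stability of $\tf{Y}^{-1}\tf{X}$ does not by itself give this (a common factor of $\tf{Y}$ and $\tf{X}$ vanishing on the unit circle breaks it), so it is best stated as a standing assumption on the factorisation.
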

\begin{pf}
    Define an auxiliary scalar $\gamma$ to bound the largest singular value of $\tf{\Lambda}(e^{j\omega})$ in~\eqref{eq:min_problem_spregret_generic_freq}. This results in an equivalent epigraph formulation:
    \begin{IEEEeqnarray*}{rCl'l}
    \IEEEeqnarraymulticol{4}{c}{
        \min \limits_{\tf{K} \in \mathcal{C}_\text{stab} \cap \mathcal{K}, \, \gamma} \, {\gamma}
    }  \IEEEyesnumber \label{eq:spreg_base_optimisation}
    \\ 
        \tf{T}_{zw} (e^{j\omega})^* \tf{T}_{zw} (e^{j\omega})
            - \tf{\hat{T}}_{zw} (e^{j\omega})^* \tf{\hat{T}}_{zw} (e^{j\omega}) 
            &\preceq& \gamma I, & \forall\omega.
    \end{IEEEeqnarray*}
    Any controller ${\tf{K} \in \mathcal{K}}$ can be expressed through left factorisation as ${\tf{K} = \tf{Y}^{-1} \tf{X}}$, with ${X\in\mathcal{X}\subset\mathcal{RH}_\infty}$ and ${Y\in\mathcal{Y}\subset\mathcal{RH}_\infty}$ as defined in \Cref{subsec:data_driven_classical_results} (see \Cref{app:equivalence} for more details). So the closed-loop transfer function becomes:
    \begin{equation}
        \tf{T}_{zw} = \tf{\Phi}^\R \ab( \tf{\Phi} \tf{G}_{11} + \tf{X} \tf{G}_{21}) + \tf{\Psi} \tf{G}_{11} 
    \end{equation}
    
    Since ${\tf{\Psi}^* \tf{\Phi}^\R  = \tf{\Psi} \tf{\Phi}^\R = \tf{\Phi}^\R - \tf{\Phi}^\R \tf{\Phi} \tf{\Phi}^\R = 0 }$, the constraint becomes:
    \begin{multline}
    \label{eq:spreg_constraint_preschur}
    \tf{T}_{zw}^* \tf{T}_{zw}
        = \ab( \tf{\Phi} \tf{G}_{11} + \tf{X} \tf{G}_{21} )^*
          \ab( \tf{\Phi} \tf{\Phi}^* )^\R \ab( \tf{\Phi} \tf{G}_{11} + \tf{X} \tf{G}_{21} )
    \\
        + \ab( \tf{\Psi} \tf{G}_{11} )^* \ab( \tf{\Psi} \tf{G}_{11}  )
        \preceq \gamma I + \tf{\hat{T}}_{zw}^* \tf{\hat{T}}_{zw}.
    \end{multline}
    Because $\tf{G}_{12}$ has full column rank by assumption, $ \tf{\Phi} \tf{\Phi}^* $ is a full-rank square matrix and ${\ab( \tf{\Phi} \tf{\Phi}^* )^\R = \ab( \tf{\Phi} \tf{\Phi}^* )^{-1}}$. Applying the Schur complement lemma to \eqref{eq:spreg_constraint_preschur} yields:
    \begin{equation}
        \bmat{
        \tf{\hat{\Gamma}} - \ab(\tf{\Psi} \tf{G}_{11})^* \ab(\tf{\Psi} \tf{G}_{11})
            & \ab( \tf{\Phi} \tf{G}_{11} + X \tf{G}_{21} )^* \\
        \ab( \tf{\Phi} \tf{G}_{11} + X \tf{G}_{21} )
            & \tf{\Phi} \tf{\Phi}^*
        } (e^{j\omega}) \succeq 0.
    \end{equation}
    Finally, we impose that $\tf{Y}^{-1} \tf{X} \in \mathcal{C}_{\text{stab}}$ to ensure the closed-loop stability, i.e. ${\tf{K}\in\mathcal{C}_\text{stab}}$. So, putting all this together, we obtain problem~\eqref{eq:base_optimisation}.
    \qed
\end{pf}

\Cref{th:initial_reformulation_spreg} requires \Cref{assumption:1b} to guarantee the existence of $\tf{G}_{12}^\L$. 
Problem~\eqref{eq:base_optimisation} is nonconvex due to the presence of the quadratic term $\tf{\Phi}\tf{\Phi}^*$, where $\tf{\Phi}$ depends affinely on $(\tf{X},\tf{Y})$.
Furthermore, it is not obvious how to guarantee the closed-loop stability using the obtained controller, i.e. ${\tf{Y}^{-1} \tf{X} \in \mathcal{C}_\text{stab} }$, making problem~\eqref{eq:base_optimisation} non-tractable.
In that effect, we propose the next theorem, which uses an iterative convexification of problem~\eqref{eq:base_optimisation} while guaranteeing closed-loop stability at each iteration.

\begin{thm}\label{thm:convexification}
    A locally optimal solution for problem~\eqref{eq:base_optimisation} can be obtained through iterative optimisation around an initial stabilising controller ${\tf{K}_c = \tf{Y}_c^{-1} \tf{X}_c } \in \mathcal{C}_{\textnormal{stab}} \cap \mathcal{K}$ by solving:
    \begin{IEEEeqnarray*}{l'rCl'l}
    \IEEEeqnarraymulticol{5}{c}{
        \min \limits_{\tf{X}, \, \tf{Y}, \, \gamma} \, {\gamma}
    }  \IEEEyesnumber \label{eq:spreg_convex_optimisation}
    \\ 
        & 
        \bmat{
        \tf{\hat{\Gamma}} - \ab(\tf{\Psi} \tf{G}_{11})^* \ab(\tf{\Psi} \tf{G}_{11})
            & \ab( \tf{\Phi} \tf{G}_{11} + X \tf{G}_{21} )^* \\
        \ab( \tf{\Phi} \tf{G}_{11} + X \tf{G}_{21} )
            & \tf{\Phi}_c \tf{\Phi}^* + \tf{\Phi} \tf{\Phi}_c^* - \tf{\Phi}_c \tf{\Phi}_c^*
        }&\succeq& 0, & \forall\omega
    \end{IEEEeqnarray*}
    where ${\tf{\Phi}_c = \ab( \tf{Y}_c - \tf{X}_c \tf{G}_{22} ) \tf{G}_{12}^\L }$. 
    Moreover, any solution of \eqref{eq:spreg_convex_optimisation} is a stabilising controller.
\end{thm}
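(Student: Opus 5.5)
The proof splits into two parts: an inner convex approximation argument that gives feasibility and monotone decrease of $\gamma$, and a stability argument for every iterate.

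\textbf{Inner approximation and local optimality.} For any $\tf{\Phi}$ and $\tf{\Phi}_c$ we have
\begin{equation*}
(\tf{\Phi}-\tf{\Phi}_c)(\tf{\Phi}-\tf{\Phi}_c)^* \succeq 0 ,
\end{equation*}
which expands to $\tf{\Phi}\tf{\Phi}^* \succeq \tf{\Phi}_c\tf{\Phi}^* + \tf{\Phi}\tf{\Phi}_c^* - \tf{\Phi}_c\tf{\Phi}_c^*$ at every $\omega$. The lower-right block of \eqref{eq:spreg_convex_optimisation} therefore lower-bounds the one in \eqref{eq:base_optimisation}, and the two coincide when $\tf{\Phi}=\tf{\Phi}_c$.

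Consequences, in order:
\begin{itemize}
\item The constraint in \eqref{eq:spreg_convex_optimisation} is affine in $(\tf{X},\tf{Y},\gamma)$, since $\tf{\Phi}$, $\tf{\Psi}$ and $\tf{\hat{\Gamma}}$ depend affinely on the decision variables and $\tf{\Psi}\tf{G}_{11}$ is fixed. Each iteration is thus a convex (gridded) SDP.
\item Any feasible point of \eqref{eq:spreg_convex_optimisation} also satisfies the matrix inequality of \eqref{eq:base_optimisation}, because we only enlarge the lower-right block.
\item $(\tf{X}_c,\tf{Y}_c,\gamma_c)$ is feasible for \eqref{eq:spreg_convex_optimisation}. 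Here $\gamma_c$ is the spatial regret of $\tf{K}_c$, which is finite because $\tf{K}_c$ is stabilising and \Cref{assumption:2} holds.
\end{itemize}
It follows that the sequence $\gamma_k$ is non-increasing and bounded below, since by \Cref{lemma:well_posedness} it is bounded by $0$. Hence $\gamma_k$ converges. At a fixed point, the convexified problem is a first-order (Taylor) linearisation of the nonconvex constraint around $\tf{\Phi}_c$, with matching value and gradient, so KKT points of the subproblem yield KKT points of \eqref{eq:base_optimisation}. I would phrase this as the standard convex–concave procedure argument and call it "locally optimal" in that sense.

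\textbf{Stability.} Take a feasible $(\tf{X},\tf{Y})$. The lower-right block gives, for all $\omega$,
\begin{equation*}
\tf{\Phi}_c\tf{\Phi}^* + \tf{\Phi}\tf{\Phi}_c^* - \tf{\Phi}_c\tf{\Phi}_c^* \succeq 0 .
\end{equation*}
I expect the needed strictness from the full-rank/Schur argument, or to be imposed with a small margin $\epsilon I$. Then
\begin{equation*}
\tf{\Phi}_c\tf{\Phi}^* + \tf{\Phi}\tf{\Phi}_c^* \succ 0 .
\end{equation*}
Multiplying by $\tf{G}_{12}^*$-related factors, I want to convert this into
\begin{equation*}
\tf{P}_c \tf{P}^* + \tf{P} \tf{P}_c^* \succ 0 , \qquad \tf{P} \coloneq \tf{Y}-\tf{X}\tf{G}_{22}, \quad \tf{P}_c \coloneq \tf{Y}_c-\tf{X}_c\tf{G}_{22} .
\end{equation*}
This conversion is the step I expect to need care, since $\tf{G}_{12}^\L$ is not square. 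The route I would try first is to note that the condition is equivalent to $\operatorname{Re}\bigl(\tf{P}\tf{G}_{12}^{\L}(\tf{P}_c\tf{G}_{12}^{\L})^{\R}\bigr)\succ0$ on the range of $\tf{\Phi}_c$. Alternatively, I would fall back on the lemma of \citet{schuchert_DatadrivenFixedstructureFrequencybased_2024}, which uses exactly this block for stability and works directly with $\tf{\Phi}$. Since the lower-right block is identical to that of \eqref{eq:datadriven_H2_Hinf}, their stability theorem applies verbatim, and I would invoke it.

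The underlying mechanism is a homotopy/Nyquist argument. Consider $\tf{P}_\tau=(1-\tau)\tf{P}_c+\tau\tf{P}$ for $\tau\in[0,1]$. The positivity condition implies $\tf{P}_\tau\tf{P}_c^*+\tf{P}_c\tf{P}_\tau^*\succ0$ for all $\tau$, so $\det \tf{P}_\tau(e^{j\omega})\neq 0$ on the unit circle and the winding number of $\det\tf{P}_\tau$ is constant in $\tau$. The closed-loop poles of $\tf{Y}^{-1}\tf{X}$ with $\tf{G}_{22}$ are the zeros of $\det(\tf{Y}-\tf{X}\tf{G}_{22})$, after accounting for the poles of $\tf{G}_{22}$ and of the stable $\tf{X}$, $\tf{Y}$. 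By the generalised Nyquist criterion, equal winding with the stabilising $\tf{K}_c$ gives $\tf{K}\in\mathcal{C}_{\text{stab}}$, provided $\det\tf{Y}$ has no unstable zeros. That last condition follows by the same winding argument when $\tf{Y}_c$ is chosen appropriately, as in \citet{schuchert_DatadrivenFixedstructureFrequencybased_2024}. Since this holds at every iterate, we also get $\tf{K}\in\mathcal{K}$ from the choice of $\mathcal{X}$ and $\mathcal{Y}$. The main obstacle is this Nyquist step, together with the passage from the $\tf{\Phi}$-level inequality to $\tf{P}$.
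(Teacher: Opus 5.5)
Your proposal is correct and follows the paper's proof. You use the same inner bound from $(\tf{\Phi}-\tf{\Phi}_c)(\tf{\Phi}-\tf{\Phi}_c)^*\succeq 0$ and the same direct appeal to \citet[Theorem~1]{schuchert_DatadrivenFixedstructureFrequencybased_2024} for stability, and your feasibility, monotone-decrease and convex--concave-procedure reasoning adds detail on the ``locally optimal'' claim that the paper leaves implicit.

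Two details in your optional Nyquist sketch need correcting. First, work directly with $\tf{\Phi}_\tau=(1-\tau)\tf{\Phi}_c+\tau\tf{\Phi}$: positivity of $\tf{\Phi}_\tau\tf{\Phi}_c^*+\tf{\Phi}_c\tf{\Phi}_\tau^*$ forces $\tf{\Phi}_\tau$ to have full row rank, hence $\det\tf{P}_\tau\neq0$, whereas the unweighted $\tf{P}$-inequality need not hold. Second, drop the requirement on unstable zeros of $\det\tf{Y}$: since $\det(I-\tf{G}_{22}\tf{K})=\det(\tf{Y}-\tf{X}\tf{G}_{22})/\det\tf{Y}$, they cancel against the unstable controller poles in the Nyquist count, which matters because the LMI never constrains $\tf{Y}$ alone.
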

\begin{pf}
    We can derive a convex lower bound for $\tf{\Phi} \tf{\Phi}^*$ by expanding the positive semidefinite condition given by ${(\tf{\Phi} - \tf{\Phi}_c) (\tf{\Phi} - \tf{\Phi}_c)^* \succeq 0}$ for any choice of $\tf{\Phi}_c$:
    \[
        \tf{\Phi} \tf{\Phi}^* \succeq \tf{\Phi}_c \tf{\Phi}^* + \tf{\Phi} \tf{\Phi}_c^* - \tf{\Phi}_c \tf{\Phi}_c^*.
    \]
    The choice of $\tf{\Phi}_c$ is critical for guaranteeing closed-loop stability. 
    By \citet[Theorem 1]{schuchert_DatadrivenFixedstructureFrequencybased_2024}, selecting ${\tf{\Phi}_c = \ab( \tf{Y}_c - \tf{X}_c \tf{G}_{22} ) \tf{G}_{12}^\L }$ with a stabilising controller ${\tf{K}_c = \tf{Y}_c^{-1} \tf{X}_c }$ ensures that any solution ${\tf{K} = \tf{Y}^{-1} \tf{X} }$ of problem~\eqref{eq:spreg_convex_optimisation} also stabilises the closed-loop system.
    \qed
\end{pf}

\Cref{thm:convexification} provides a tractable solution to the spatial regret minimisation problem~\eqref{eq:min_problem_spregret_generic_freq}.
Unlike the method in \citet{martinelli2025spatialregret}, our approach imposes no restrictions on the controller sets $\mathcal{K}$ and $\hat{\mathcal{K}}$ beyond the structural constraints themselves, substantially increasing design flexibility. The cost of this generalisation is that problem~\eqref{eq:spreg_convex_optimisation} must be solved iteratively, with convergence rate depending on the initial controller and problem conditioning~\citep{schuchert2023achieving}.
Furthermore, the closed-loop stability provided in \Cref{thm:convexification} is a sufficient condition; a necessary and sufficient condition remains to be determined.

\begin{rem}
    It is interesting to note that the primary difference in the constraint set of the controller synthesis problem among $\H2$, $\Hinfty$, and spatial regret lies in the choice of $\tf{\Gamma}$. This observation aligns with the intuition that spatial regret represents a compromise, influenced by the oracle, between the $\H2$ and $\Hinfty$ at each frequency point.
\end{rem}

\begin{rem}
    Similarly to the classical data-driven methods discussed in~\Cref{subsec:data_driven_classical_results}, the requirement of a pre-stabilising controller $\tf{K}_c$ is natural in data-driven settings, as safe and informative data collection for identifying $\tf{G}$ as in~\eqref{eq:plant_from_data} typically necessitates stabilisation.
\end{rem}

\section{Numerical Results} \label{sec:numerical_results}

We illustrate the data-driven spatial regret synthesis on the swing dynamics of a 5-bus power grid, following the model in~\citet{anderson2019system, martinelli2025spatialregret}. We first describe the plant and the frequency-domain data used for controller synthesis, then compare the spatial regret controller against classical $\H2$ and $\Hinfty$ distributed designs.

\begin{figure}[t]
    \centering
    \begin{tikzpicture}[
        auto,  
        node distance=5em,
        >=latex,
        thick,
    ]
    \definecolor{nodefill_color}{RGB}{251, 231, 207}
    \definecolor{nodedraw_color}{RGB}{206, 157, 53}
    \definecolor{connection_color}{RGB}{115, 141, 211}
    \tikzset{
        plant/.style={
            rectangle, 
            fill=nodefill_color, draw=nodedraw_color, text=black, 
            minimum height=2em, minimum width=2em, 
            },
        node/.style={
            circle, 
            fill=nodefill_color, draw=nodedraw_color, text=black, 
            minimum height=2em, minimum width=2em, 
            },
    }
    
    \node[node] (K1) {$K^{[1]}$};
    \node[node, right of=K1] (K2) {$K^{[2]}$};
    \node[node, right of=K2] (K3) {$K^{[3]}$};
    \node[node, right of=K3] (K4) {$K^{[4]}$};
    \node[node, right of=K4] (K5) {$K^{[5]}$};

    \node[plant, above of=K1, yshift=0.5em] (P1) {$P^{[1]}$};
    \node[plant, right of=P1] (P2) {$P^{[2]}$};
    \node[plant, right of=P2] (P3) {$P^{[3]}$};
    \node[plant, right of=P3] (P4) {$P^{[4]}$};
    \node[plant, right of=P4] (P5) {$P^{[5]}$};

    \foreach \k in {1,...,5}{
        \draw[->, gray, thin] (K\k.north) -- (P\k.south) 
            node[midway, right] {$u^{[\k]}$};
        \draw[->, gray, thin] (P\k.north) 
            |- +(-2em, 1em) node[midway, right] {$y^{[\k]}$}
            -- ($(K\k.south) + (-2em, -1em)$)
            -| (K\k.south); 
    }

    \foreach \k in {2,...,5}{
        \pgfmathsetmacro{\prev}{int(\k - 1)};
        \draw[-, nodedraw_color, very thick]
            (P\prev.east) -- (P\k.west);
    }

    \foreach \k in {2,...,5}{
        \pgfmathsetmacro{\prev}{int(\k - 1)};
        \draw[->, connection_color, dashed, very thick, out=30, in=150]
            (K\prev.north east) to (K\k.north west);
        \draw[<-, connection_color, dashed, very thick, out=-30, in=-150]
            (K\prev.south east) to (K\k.south west);
    }

    \draw[->, dashed, red, very thick, out=-45, in=-135]
        (K1.south) to (K3.south);
    \draw[->, dashed, red, very thick, out=-45, in=-135]
        (K1.south) to (K4.south);
    \draw[->, dashed, red, very thick, out=-45, in=-135]
        (K1.south) to (K5.south);
\end{tikzpicture}
    \caption{Interaction graph of the 5-bus power system model. Blue dashed arrows show the communication connections for the controller. Red dashed arrows indicate the oracle's additional connections.
    }
    \label{fig:grid_scheme}
\end{figure}

We consider the discrete-time swing dynamics of the 5-bus network in Fig.~\ref{fig:grid_scheme}. 
Each bus $i \in \{1,\ldots,5\}$ is a subsystem with state $x^{[i]}_t \in \mathbb{R}^2$ representing phase angle and frequency deviations. For simplicity, all buses share identical parameters (in per-unit): inertia $m_i = \SI{2}{}$, damping $d_i = \SI{2}{}$, and coupling $k_{ij} = \SI{20}{}$. The sampling period is $T_s = \SI{0.02}{s}$. The network is open-loop stable, thus data can be collected with $\tf{K}_c = 0$. Each subsystem evolves according to~\eqref{eq:ss_networked_system} with matrices:
\begin{gather*}
    A^{[i,i]} = \begin{bmatrix}
        1 & T_s \\
        -\frac{k_i}{m_i}T_s & 1 - \frac{d_i}{m_i}T_s
    \end{bmatrix}, \:\:
            A^{[i,j]} = \begin{bmatrix}
        0 & 0 \\
        \frac{k_{ij}}{m_i}T_s & 0
    \end{bmatrix},
    \\
    B_{1}^{[i,i]} = \begin{bmatrix}
        0\\
        1
    \end{bmatrix}
    ,
    \:
        B_{2}^{[i,i]} = \begin{bmatrix}
            0 \\ \frac{T_s}{m_i}
        \end{bmatrix},
    \:
        C_{1}^{[i,i]} = \begin{bmatrix}
            1 \\ 0
        \end{bmatrix}^T,
    \:
        C_{2}^{[i,i]} = \begin{bmatrix}
            1 \\ 0
        \end{bmatrix}^T,
    \\
        D_{12}^{[i,i]} = \begin{bmatrix}
            0 \\ 1
        \end{bmatrix},
    \quad
        D_{21}^{[i,i]} = 1,  
    \quad
    D_{11} = D_{22} = 0\,.
\end{gather*}

For the structure of the distributed controllers we impose delayed nearest-neighbour communication (one-step delay) and immediate local feedback:
\begin{equation*}    
\mathcal{K} = \ab\{
        \tf{K} \in \mathcal{R}_p^{5\times 5} 
            \ \middle\mid \ 
        \tf{K} =
        \begin{bmatrix}
            \yellow{\bullet} & \frac{1}{z}\blue{\bullet} & \tf{0}  & \tf{0} & \tf{0} \\
            \frac{1}{z}\blue{\bullet} & \yellow{\bullet} & \frac{1}{z}\blue{\bullet} & \tf{0} & \tf{0} \\
            \tf{0}  & \frac{1}{z}\blue{\bullet} & \yellow{\bullet} & \frac{1}{z}\blue{\bullet} & \tf{0} \\
            \tf{0}  & \tf{0}  & \frac{1}{z}\blue{\bullet} & \yellow{\bullet} & \frac{1}{z}\blue{\bullet} \\
            \tf{0}  & \tf{0}  & \tf{0}  & \frac{1}{z}\blue{\bullet} & \yellow{\bullet}
        \end{bmatrix}
    \}\,,
\end{equation*}
where $\bullet$ denotes a scalar transfer function of order 2. Hence, $u^{[i]}_t$ depends on $y^{[i]}_t$ and delayed neighbour signals $y^{[j]}_{t-1}$, where $j$ are the neighbours of $i$.

Our framework aims to develop a controller that, by mimicking a graph-informed oracle, achieves better performance compared to traditional control techniques.
To investigate this, we assume that the first bus is prone to disturbances. Hence, we synthesise an oracle $\tf{\hat{K}}$ in which each subcontroller has immediate access to the measurements of the $1^{st}$ bus. In \Cref{fig:grid_scheme}, the additional connections in the oracle graph are shown in red. We also assume each subcontroller to access immediately measurements from neighbours. Hence we define:
\begin{equation*}
    \hat{\mathcal{K}} = \ab\{
        \tf{\hat{K}} \in \mathcal{R}_p^{5\times 5} 
        \ \middle\mid \ 
        \tf{\hat{K}} =
        \begin{bmatrix}
        \yellow{\bullet} & \blue{\bullet} & \tf{0}  & \tf{0} & \tf{0} \\
        \blue{\bullet} & \yellow{\bullet} & \blue{\bullet} & \tf{0} & \tf{0} \\
        \red{\bullet}  & \blue{\bullet} & \yellow{\bullet} & \blue{\bullet} & \tf{0} \\
        \red{\bullet}  & \tf{0}  & \blue{\bullet} & \yellow{\bullet} & \blue{\bullet} \\
        \red{\bullet}  & \tf{0}  & \tf{0}  & \blue{\bullet} & \yellow{\bullet}
        \end{bmatrix}
    \}\,.
\end{equation*}
We compute the oracle $\tf{\hat{K}} \in \hat{\mathcal{K}}$ by solving~\eqref{eq:datadriven_H2_Hinf} for $\Hinfty$ minimisation. The spatial regret controller $\tf{K}^{\textnormal{SR}} \in \mathcal{K}$ is obtained as in~\eqref{eq:spreg_convex_optimisation}. 
For benchmarks, we synthesise $\tf{K}^{2}$ and $\tf{K}^{\infty}$ by solving~\eqref{eq:datadriven_H2_Hinf} with $\H2$ and $\Hinfty$ objectives, respectively and with $\tf{K}^{2}, \tf{K}^{\infty} \in \mathcal{K}$.
All designs use $600$ logarithmically spaced frequency samples in $[10^{-2}, \frac{\pi}{T_s}]$.

To assess performance, we evaluate $\|\tf{T}_{zw}(e^{j\omega})\|_2^2$ for each $\omega$, which equals the worst-case cost over multi-sinusoidal disturbances at frequency $\omega$ \citep[Ch.~4]{zhou1998essentials}:
\begin{equation*}
    \norm{\tf{T}_{zw}(e^{j \omega})}_2^2 = \max_{\{w_i\}, \{\phi_i\}} \frac{\norm{z}_2^2}{\norm{w}_2^2}
\end{equation*}
where ${w = \begin{bmatrix}
    w_1 \cos(\omega t + \phi_1)   & \cdots & w_5 \cos(\omega t + \phi_5)   
\end{bmatrix}^T
}$ such that $\sum_{i=1}^{5} w_i = 1$. 
As shown in \Cref{fig:plot_with_5_masses_on_all_channels}, $\tf{K}^{\textnormal{SR}}$ achieves a favourable trade-off between the average performance of $\tf{K}^2$ and the robust performance of $\tf{K}^\infty$ over the resonance peak around \SI{8}{\radian\per\sec}. 

However, these results consider disturbances applied to all buses simultaneously. Due to the oracle design, we expect $\tf{K}^{\text{SR}}$ to perform better for disturbances localised at bus 1. To investigate this, we apply single-bus disturbances of the form:
\begin{equation*}
    \bar{w}_t \coloneq \begin{bmatrix}
        \cos(\omega t) & 0 & 0 & 0 & 0
    \end{bmatrix}^T\,,
\end{equation*}
and we denote with $\bar{z}_t$ its corresponding measured output. Hence, for each controller and for each frequency $\omega \in \Omega$, we estimate $\norm{\tf{T}_{zw}^{[:,1]}(e^{j \omega})}_2^2$, which satisfies:
\begin{equation*}
    \norm{\tf{T}_{zw}^{[:,1]}(e^{j \omega})}_2^2 =     \begin{cases}
        \langle \norm{\bar{z}}^2 \rangle & \textnormal{if } \omega = 0 \textnormal{ or } \omega = \pm \frac{\pi}{T_s}
        \\
        2\:\langle \norm{\bar{z}}^2 \rangle & \textnormal{otherwise}\,,
    \end{cases}
\end{equation*}
where $\langle \norm{\bar{z}}^2 \rangle \coloneq \lim_{T\to \infty} \frac{1}{T} \sum_{t=0}^{T-1} \norm{\bar{z}_t}^2$ is the time-averaged energy of the regulated output when $w_t = \bar{w}_t$.
\Cref{fig:plot_with_5_masses_on_1_channel} shows $\norm{\tf{T}_{zw}^{[:,1]}(e^{j \omega})}_2^2$ as a function of frequency.
The spatial regret controller $\tf{K}^{\textnormal{SR}}$ closely mimics the oracle $\tf{\hat{K}}$, achieving improved performance compared to both $\tf{K}^2$ (at the peak around \SI{8}{\radian\per\sec}) and $\tf{K}^\infty$ (across all other frequencies).
For time-domain validation, we simulate a multi-sine disturbance with frequencies \SIlist{8;38}{\radian\per\sec} applied to bus 1.
\Cref{fig:response_5_masses_sinusoidal_disturbance} shows the time evolution of $\norm{z_t}^2$.
The spatial regret controller achieves an average $\norm{z}_2$ reduction of \SI{21.72}{\percent} relative to $\tf{K}^2$ and \SI{48.00}{\percent} relative to $\tf{K}^\infty$.

\begin{figure}[t]
    \centering
    \includegraphics[width=0.9\linewidth]{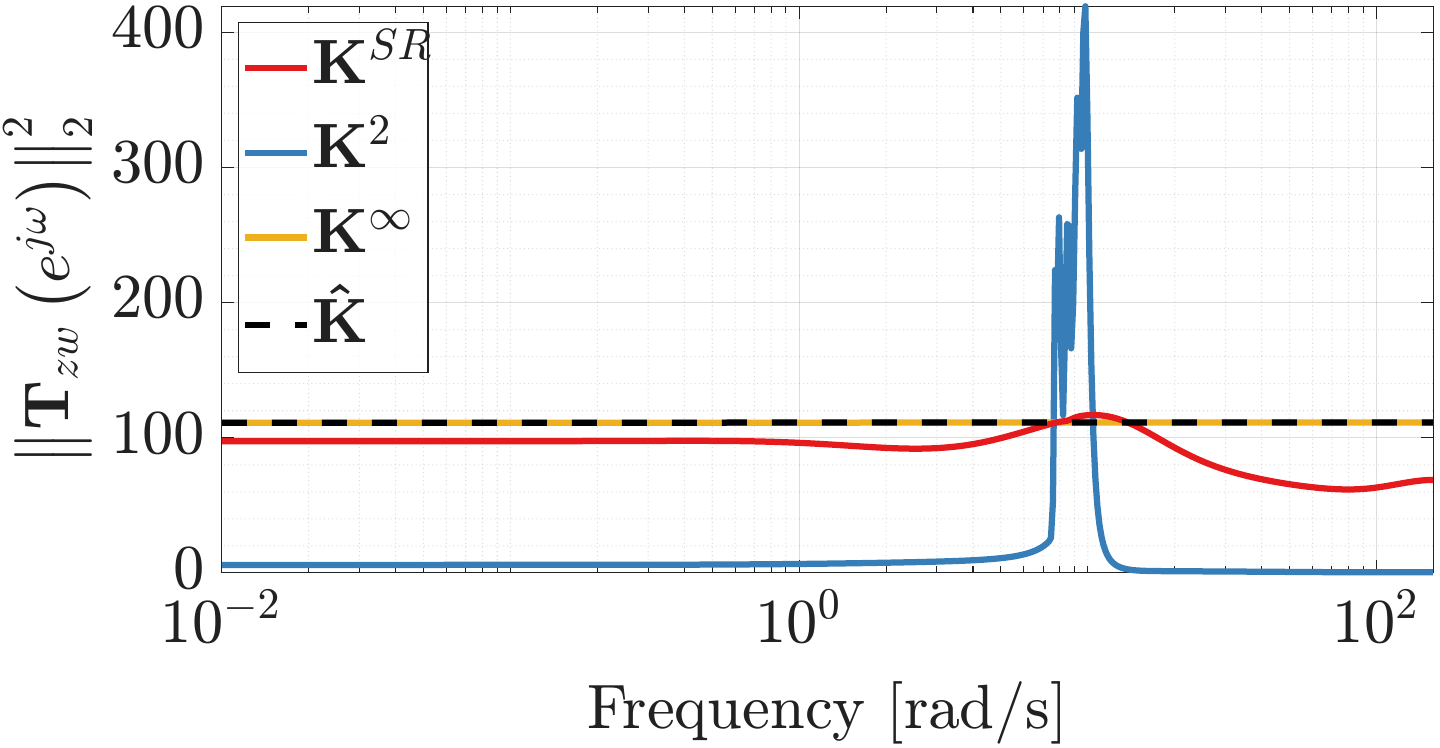}
    \caption{Plot of the squared 2-norm of $\tf{T}_{zw}(e^{j \omega})$ as a function of frequency.}
    \label{fig:plot_with_5_masses_on_all_channels}
\end{figure}

\begin{figure}[t]
    \centering
    \includegraphics[ width=0.9\linewidth]{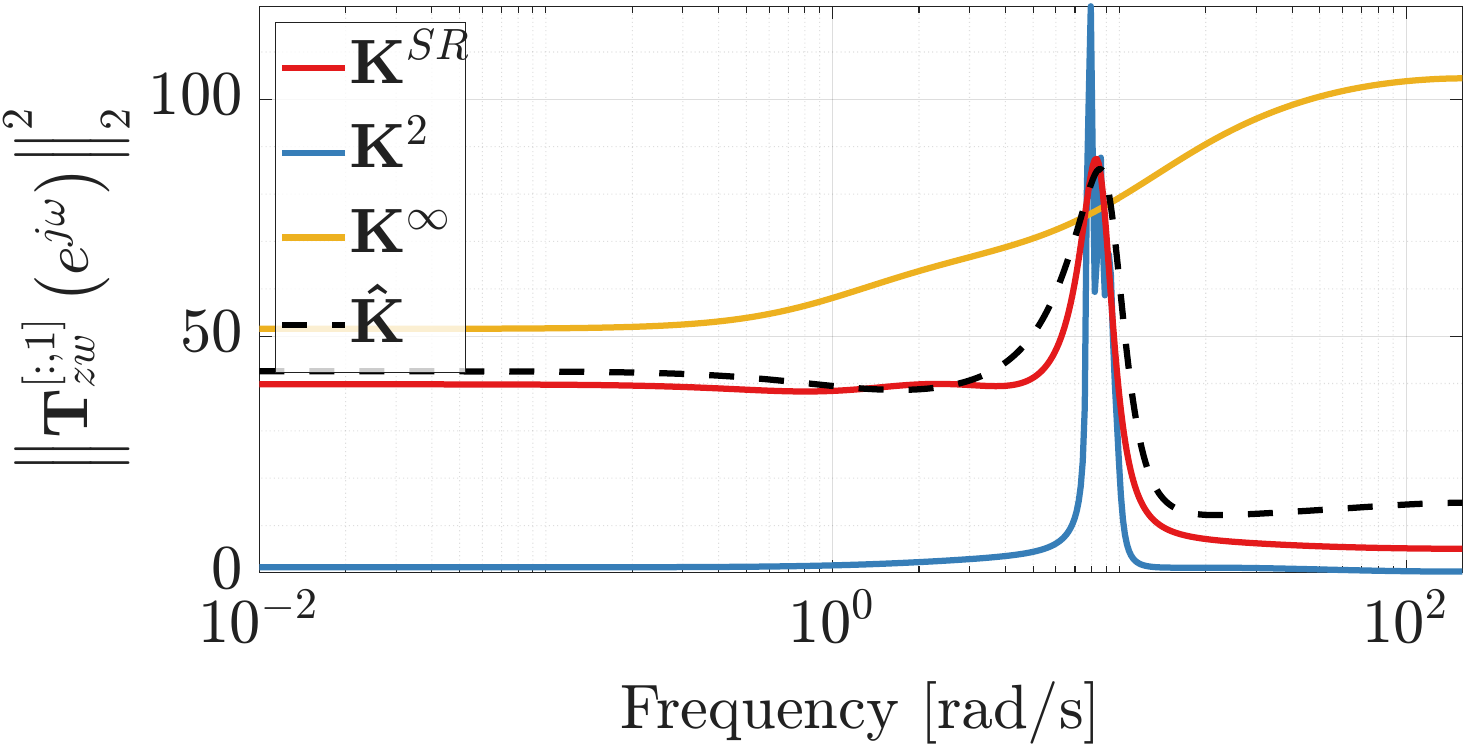}
    \caption{Plot of the squared 2-norm of $\tf{T}_{zw}^{[:,1]}(e^{j \omega})$ as a function of frequency.}
    \label{fig:plot_with_5_masses_on_1_channel}
\end{figure}
\begin{figure}[t]
    \centering
    \includegraphics[width=0.9\linewidth]{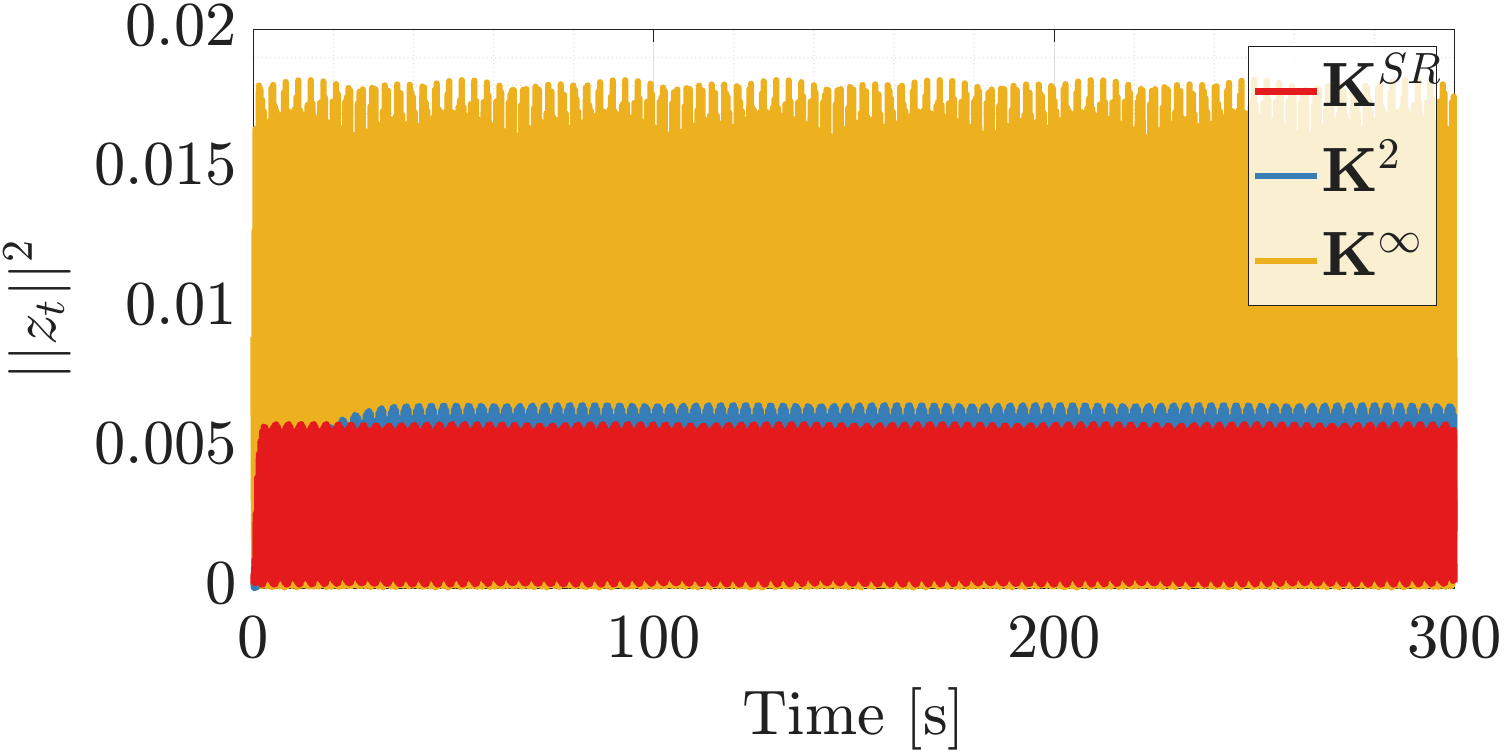}
    \caption{Plot of $\norm{z_t}^2$ as a function of time for the three controllers with a multi-sine disturbance with frequencies \SIlist{8;38}{\radian\per\sec} on the first bus.}%
    \label{fig:response_5_masses_sinusoidal_disturbance}
\end{figure}

\section{Conclusion} \label{sec:conclusion}
This paper developed a data-driven framework for distributed controller synthesis that minimises spatial regret, a graph-informed performance metric quantifying the cost of communication constraints. We extended the well-posedness result of the metric to arbitrary controller structures, removing restrictive assumptions on the parametrisation and oracle topology. The cost of this generalisation is that the synthesis problem must be solved iteratively, and the completeness of the set of left-factorised distributed stabilising controllers is not guaranteed.
By reformulating spatial regret minimisation in the frequency domain, we derived a tractable iterative convex optimisation algorithm that preserves closed-loop stability and operates directly on measured frequency-response data without requiring parametric system identification. Numerical experiments on a 5-bus power grid demonstrated that the spatial regret controller achieves superior disturbance rejection at targeted nodes compared to classical $\H2$ and $\Hinfty$ designs, with performance closely tracking that of the oracle controller.

Building upon the data-driven framework for robustness proposed in \citet{gupta_NonparametricIQCMultipliers_2026}, we aim to extend the present framework to address the challenge of noisy frequency-response measurements and nonlinearities in the system. Furthermore, the proposed synthesis algorithm is formulated as a semidefinite program, whose computational complexity scales poorly with network size. Consequently, exploring distributed implementations of the optimisation problem is another avenue of research to enable applications in large-scale systems.

\bibliography{bibliography}
                                                   
\appendix
\section{Equivalence of Fixed-Structure and Left-Factorised Distributed Controllers}
\label[app]{app:equivalence}

\begin{lem}
    Let $\mathcal{K}$ denote the set of admissible fixed-structure distributed controllers corresponding to the communication topology $\mathcal{S}$. Furthermore, let $\tilde{\mathcal{K}}$ denote the set of left factorised distributed controllers with the same communication topology $\mathcal{S}$.
    \begin{align*}
    \mathcal{K} 
        &= \ab\{
            \tf{K} \in \mathcal{R}_p^{n_o \times n_i}
            \,\middle\mid\,
            \tf{K} \in \operatorname{sparse}(\mathcal{S})
        \}
    \\
    \tilde{\mathcal{K}} 
        &= \ab\{
            \tf{Y}^{-1} \tf{X}
            \,\middle\mid\,
            \begin{matrix*}[l]
                \tf{Y} \in \mathcal{Y} \subseteq \mathcal{RH}_\infty \cap \operatorname{sparse}(\mathcal{D}) \\
                \tf{X} \in \mathcal{X} \subseteq \mathcal{RH}_\infty \cap \operatorname{sparse}(\mathcal{S})
            \end{matrix*}
        \}
    \end{align*}
    where $\operatorname{sparse}(\mathcal{D})$ describes a block diagonal structure. Then, the sets $\mathcal{K}$ and $\tilde{\mathcal{K}}$ are equivalent.
\end{lem}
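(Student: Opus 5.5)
We prove the equality $\mathcal{K} = \tilde{\mathcal{K}}$ by double inclusion. Throughout, we use that $\operatorname{sparse}(\mathcal{S})$ contains the block-diagonal pattern $\operatorname{sparse}(\mathcal{D})$, because every subsystem has access to its own measurement. For the same reason, $\operatorname{sparse}(\mathcal{S})$ is invariant under left multiplication by block-diagonal transfer matrices: if $\tf{D}$ is block diagonal and $\tf{M}\in\operatorname{sparse}(\mathcal{S})$, then $(\tf{D}\tf{M})^{[i,j]} = \tf{D}^{[i,i]}\tf{M}^{[i,j]}$ is zero whenever $\tf{M}^{[i,j]}$ is zero. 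We also take $\mathcal{X}$ and $\mathcal{Y}$ to be the full sets $\mathcal{RH}_\infty \cap \operatorname{sparse}(\mathcal{S})$ and $\mathcal{RH}_\infty\cap\operatorname{sparse}(\mathcal{D})$. For restricted-order linear parametrisations such as those in \Cref{subsec:data_driven_classical_results}, only the inclusion $\tilde{\mathcal{K}}\subseteq\mathcal{K}$ survives, and we will remark on this.

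\emph{Inclusion $\tilde{\mathcal{K}} \subseteq \mathcal{K}$.} Take $\tf{K}=\tf{Y}^{-1}\tf{X}$ with $\tf{Y}$ block diagonal and invertible as a rational matrix, which is implicit in writing $\tf{Y}^{-1}$. Then $\tf{Y}^{-1}$ is block diagonal with blocks $(\tf{Y}^{[i,i]})^{-1}$. By the invariance property, $\tf{Y}^{-1}\tf{X}\in\operatorname{sparse}(\mathcal{S})$. Properness of $\tf{K}$ is the remaining point: we require $\tf{Y}(\infty)$ to be nonsingular (equivalently $D_Y$ invertible), so that $\tf{Y}^{-1}$ is proper and the product of proper matrices is proper. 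This gives $\tf{K}\in\mathcal{K}$.

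\emph{Inclusion $\mathcal{K}\subseteq\tilde{\mathcal{K}}$.} This is the constructive direction and the main obstacle. Given $\tf{K}\in\mathcal{K}$, which is proper but possibly unstable, we must build stable factors with the required sparsity. The plan is to factor each block row separately. For block row $i$, let $d_i(z)$ be the least common denominator (monic) of all nonzero entries of the row $\tf{K}^{[i,:]}$, and let $\delta_i$ be its degree. Choose a Schur-stable monic polynomial $s_i(z)$ of degree $\delta_i$, for instance $z^{\delta_i}$ in discrete time, and set
\begin{equation*}
\tf{Y}^{[i,i]} = \frac{d_i(z)}{s_i(z)}\, I, \qquad \tf{X}^{[i,j]} = \frac{d_i(z)}{s_i(z)}\, \tf{K}^{[i,j]}.
\end{equation*}
Then $\tf{Y}$ is block diagonal, stable and biproper, since $d_i$ and $s_i$ are monic of equal degree. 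Each entry of $\tf{X}^{[i,j]}$ has the form $n(z)/s_i(z)$ with $\deg n\le \delta_i$, because $\tf{K}$ is proper; hence $\tf{X}$ is stable and proper. Zero blocks of $\tf{K}$ stay zero, so $\tf{X}\in\operatorname{sparse}(\mathcal{S})$. Finally $\tf{Y}^{-1}\tf{X}=\tf{K}$ by construction.

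The delicate points to check are the properness bookkeeping and fixed communication delays such as $\frac1z$ factors. The delays are inherited automatically, because we multiply each row only by a scalar biproper factor. Hence $\tf{X}$ preserves any delay structure encoded in $\operatorname{sparse}(\mathcal{S})$, which closes the argument.
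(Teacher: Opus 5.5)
Your argument is correct and is essentially the paper's: the paper also proves $\mathcal{K}\subseteq\tilde{\mathcal{K}}$ by left-multiplying $\tf{K}$ with a stable biproper scalar that clears denominators. It uses a single global factor $\tf{y}(z)=\prod_{i,j}\tf{a}_{i,j}(z)/(z-\rho)^d$ where you use row-wise least common denominators, which only gives lower-order factors, and it dismisses the reverse inclusion in one line as following from the structure of $\mathcal{X}$ and $\mathcal{Y}$. Your two explicit caveats---that $\mathcal{X},\mathcal{Y}$ must be the full sets for the constructive direction, and that $\tf{Y}(\infty)$ must be nonsingular so that $\tf{Y}^{-1}\tf{X}$ is proper and keeps any $\frac{1}{z}$ delay structure---are left implicit in the paper but are genuinely needed (e.g.\ $\tf{Y}=1/z$, $\tf{X}=1$ gives the improper $\tf{K}=z$).
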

\begin{pf}
    [${\tf{K} \in \mathcal{K} \implies \tf{K} \in \tilde{\mathcal{K}} }$] Let $\tf{K}_{i,j}$ denote the $(i,j)$-element of $\tf{K}$. We can express it as:
    \[
        \tf{K}_{i,j} = \frac{\tf{b}_{i,j}(z)}{\tf{a}_{i,j}(z)}
    \]
    where $\tf{b}_{i,j}(z)$ and $\tf{a}_{i,j}(z)$ are polynomials in the $z$-variable. We choose $\tf{Y} = \tf{y}(z) I$, where
    \[
        \tf{y}(z) = \frac{\prod_{i, j} \tf{a}_{i,j}(z)}{\ttf{\lambda}(z)}
    \]
    with $\ttf{\lambda}(z)$ being a stable polynomial of sufficiently high degree to ensure $\tf{Y} \in \mathcal{RH}_\infty$. For instance, one may choose $\ttf{\lambda}(z) = (z - \rho)^d$ for any $|\rho| < 1$, where $d$ is the degree of $\prod_{i,j} \tf{a}_{i,j}(z)$, ensuring that $\tf{y}(z)$ is proper and stable. Consequently, the $(i,j)$-element of $\tf{X}$ can be defined as ${\tf{X}_{i,j} = \tf{y} \tf{K}_{i,j}}$, which is also in $\mathcal{RH}_\infty$. Therefore, any controller ${\tf{K} \in \mathcal{K}}$ can be written as $\tf{Y}^{-1} \tf{X}$.
    
    [${\tf{K} \in \tilde{\mathcal{K}} \implies \tf{K} \in \mathcal{K} }$] This inclusion is established by the structural properties of $\mathcal{X}$ and $\mathcal{Y}$. \hfill{} \qed
\end{pf}

\end{document}